\documentclass{article}

\usepackage{microtype}
\usepackage{graphicx}
\usepackage{subcaption}
\usepackage{booktabs} 

\usepackage{hyperref}

\usepackage[accepted]{icml2026}

\usepackage{amsmath}
\usepackage{amssymb}
\usepackage{mathtools}
\usepackage{amsthm}

\usepackage[capitalize,noabbrev]{cleveref}

\theoremstyle{plain}
\newtheorem{theorem}{Theorem}[section]
\newtheorem{proposition}[theorem]{Proposition}

\theoremstyle{definition}
\newtheorem{definition}[theorem]{Definition}

\theoremstyle{remark}
\newtheorem{remark}[theorem]{Remark}

\usepackage[textsize=tiny]{todonotes}

\usepackage{bbm}
\usepackage{bm}
\usepackage{threeparttable}
\usepackage{multirow}

\def\bbE{\mathbb{E}}
\def\bbH{\mathbb{H}}
\def\bbP{\mathbb{P}}

\def\calH{\mathcal{H}}
\def\calM{\mathcal{M}}
\def\calR{\mathcal{R}}

\def\FDR{\operatorname{FDR}}

\def\bFDR{\operatorname{bFDR}}
\def\kbFDR{k\operatorname{-bFDR}}
\def\FWER{\operatorname{FWER}}
\def\kFWER{k\operatorname{-FWER}}

\icmltitlerunning{Generalized Boundary FDR Control under Arbitrary Dependence: An Approach on Closure Principle}

\begin{document}

\twocolumn[
  \icmltitle{Generalized Boundary FDR Control under Arbitrary Dependence: \\ An Approach on Closure Principle}



  \icmlsetsymbol{equal}{*}

  \begin{icmlauthorlist}
    \icmlauthor{Yifan Zhang}{equal,sjtu}
    \icmlauthor{Wentao Zhang}{equal,sjtu}
    \icmlauthor{Changliang Zou}{nku}
    \icmlauthor{Haojie Ren}{sjtu}
  \end{icmlauthorlist}

  \icmlaffiliation{sjtu}{School of Mathematical Sciences, Shanghai Jiao Tong University, Shanghai, China}
  \icmlaffiliation{nku}{School of Statistics and Data Sciences, LPMC, KLMDASR and LEBPS, Nankai University, Tianjin, China}

  \icmlcorrespondingauthor{Haojie Ren}{haojieren@sjtu.edu.cn}

  \icmlkeywords{arbitrary dependence, boundary false discovery rate, closure principle, family-wise error rate, multiple testing}

  \vskip 0.3in
]



\printAffiliationsAndNotice{\icmlEqualContribution}

\begin{abstract}
    False discovery rate (FDR) is a cornerstone of modern multiple testing. However, it often fails to guarantee the reliability of ``marginal" discoveries that lie at the boundary of the rejection set, which are often crucial in high-precision applications. While recent works \citep{soloff2024edge,xiang2025frequentist} introduced the boundary false discovery rate (bFDR) to control the error probability at the marginal discovery, their method relies on restrictive assumptions such as independence or specific prior distributions. In this paper, we first propose $k$-bFDR, a novel generalization that controls the error probability of the $k$ least significant discoveries. We then provide a systematic investigation into the theoretical relationship between $k$-bFDR and existing error metrics. Furthermore, building upon the closure principle, we develop Domino, a unified framework that guarantees $k$-bFDR control under arbitrary dependence, applicable for both p-values and e-values. We prove the theoretical validity of the proposed Domino algorithm and demonstrate through extensive numerical experiments that it consistently achieves rigorous $k$-bFDR control while identifying trustworthy marginal discoveries. Analyses of real data reveal that $k$-bFDR control yields higher-quality rejection sets with greater practical significance. 
\end{abstract}

\section{Introduction}
In the era of high-throughput data analysis, the ability to identify true signals through thousands of hypotheses is fundamental to modern scientific discovery and technological innovation. This refers to multiple hypothesis testing, an important field in statistics that aims to control the extent of false discoveries among rejected hypotheses. The most widely used and classical metric for this purpose is the false discovery rate (FDR) introduced by \citet{benjamini1995controlling}, which provides a global, expectation-based assessment of the rejection set. However, a well-known yet often overlooked limitation of FDR is that it only guarantees the \textit{average quality} of the rejection set.

In high-precision applications such as large-scale genomic studies, one requires reliability for each individual discovery, especially those with the least evidence in the rejection set. For instance, in pan-cancer cohorts, standard FDR-control procedures (e.g., the BH procedure in \citep{benjamini1995controlling}) frequently identify canonical genes such as TP53 mutations with infinitesimal p-values \citep{lawrence2014discovery}. However, the presence of such exceptionally strong signals leads to a ``hitchhiking" phenomenon: some weaker, marginal candidates are also included in the rejection set without sufficient evidence. The mechanism underlying this phenomenon stems from the fact that FDR (see definition in \eqref{eq:FDR_def}) concerns the proportion of false rejections among the final rejection set. Concretely, the rejection threshold of the BH method is proportional to the size of the final rejection set (i.e., $|\calR|\alpha/m$; see notations in \cref{sec:metric} below). When a large number of strong signals are rejected, the rejection set expands, thereby relaxing the threshold. This relaxation in turn allows weak-evidence individuals, which in fact belong to the null hypothesis, to enter the rejection set. In essence, these marginal candidates ``hitchhike" into the discovery set on the back of the expansion driven by strong signals. As a result, although the overall FDR is controlled at the nominal level, the individuals at the boundary of the rejection set are often false discoveries, thus compromising the local reliability of the rejection set. As illustrated in \cref{fig:intro_example}, even when the BH procedure successfully controls FDR at level $\alpha=0.2$, up to 8 out of the last 10 rejections at the boundary may be false discoveries. Since scientific discovery often hinges on validating these novel, borderline candidates rather than reconfirming the well-established strong signals, this lack of boundary control undermines the statistical validity and reproducibility of new findings. Similar risks arise in some high-stakes applications, such as portfolio selection, where a single non-performing stock at the decision boundary can disproportionately compromise overall returns. In these scenarios, the \textit{average quality} of the discovery set guaranteed by FDR is no longer enough, since global error metrics fail to capture the potential risk of boundary uncertainty, especially when the cost for a false rejection is severe. 
\begin{figure}[tb]
    \vskip 0.1in
    \begin{center}
    \centerline{ \includegraphics[width=\columnwidth]{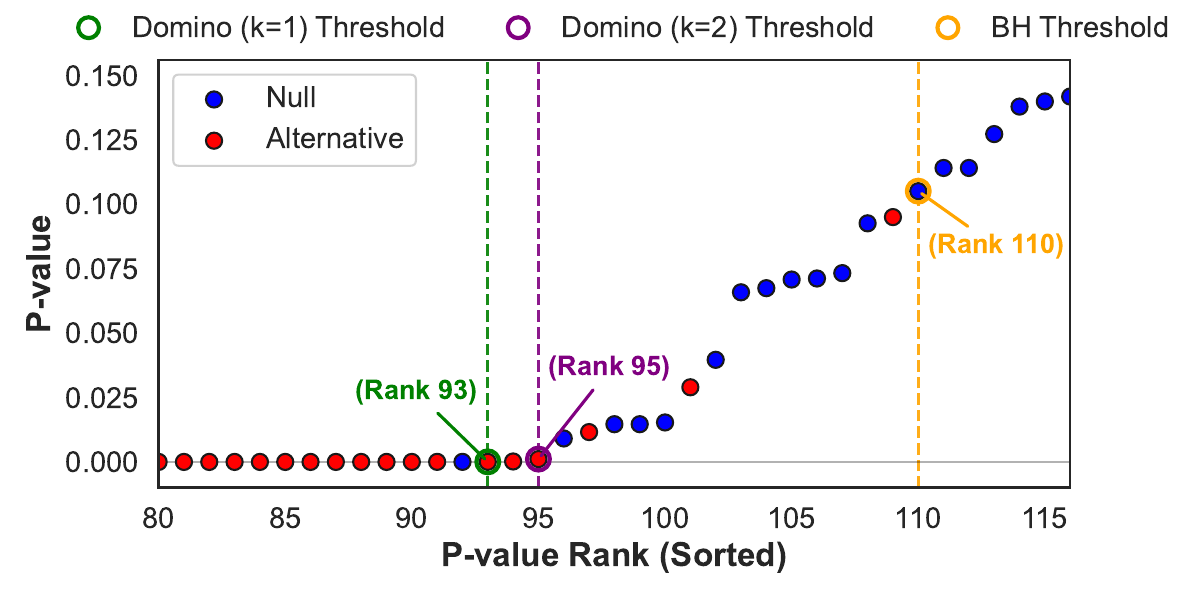}}
    \caption{Comparison of rejection sets at the boundary across proposed Domino under bFDR and $k$-bFDR ($k=2$) control and BH procedure under FDR control in DepMap CRISPR-Cas9 screening dataset \citep{tsherniak2017defining}. Red and blue points are essential genes and non-essential genes, respectively.}
    \label{fig:intro_example}
    \end{center}
    \vskip -0.1in
\end{figure}
Therefore, in such scenarios, it is necessary to control additional or alternative metrics that focus on the reliability of these boundary discoveries. 

Recent works \citep{soloff2024edge,xiang2025frequentist} introduced a novel error metric, the boundary false discovery rate (bFDR), defined as the probability that the last rejected hypothesis is a false discovery. By shifting from the average error rate (FDR) to the one near the boundary (bFDR), bFDR provides better error control for situations requiring consistent precision across all rejections. Despite its conceptual elegance, existing methods for bFDR control remain limited.  \citet{soloff2024edge,xiang2025frequentist} established the Support Line (SL) procedure with provable bFDR control for p-values under the independence assumptions. \citet{gao2026conformal} extended the SL method to tackle the issue of conformal novelty detection for bFDR control. However, in practice, researchers often need to validate a small batch of marginal discoveries rather than a single one, all while dealing with complex, unknown dependence structures.

\paragraph{Contributions.} In this paper, we conduct a comprehensive investigation of error control at the boundary of rejection sets. We first introduce $k$-bFDR, a novel generalization of the boundary error rate that quantifies the joint error probability of the $k$ least promising discoveries in the rejection set. Unlike the original bFDR, $k$-bFDR allows researchers to calibrate their risk tolerance for a batch of marginal discoveries. Building upon this metric, we systematically study the resulting $k$-bFDR framework and its relationships with existing metrics, such as FWER and FDR. 

Furthermore, we develop the Domino algorithm, a versatile framework that guarantees $k$-bFDR control under arbitrary dependence structures while remaining compatible with both p-values and e-values. Drawing inspiration from the classical closure principle \citep{marcus1976closed,bretz2009graphical}, Domino constructs the ``closure'' tests tailored to the marginal set of the rejection set, which differs from its conventional application for FWER control. Domino is compatible with both p-values and e-values, offering a principled approach to boundary control. Numerical experiments show that the Domino algorithm achieves valid $k$-bFDR control under complex dependence structures. Studies on real-world datasets further indicate that the proposed method yields high-quality rejection sets with practical utility.

\section{Generalized Metric: $k$-Boundary FDR}\label{sec:metric}

In this section, we formally define the $k$-bFDR and consider it within the broader landscape of error metrics in multiple testing. We begin with the problem setup and commonly used error metrics.

Consider $m$ null hypotheses $\bbH_1,\ldots,\bbH_m$, and define $\theta_j=0/1$ if $\bbH_j$ is true/false for $j\in[m]$. Denote the set of true nulls as $\calH_0=\{j:\theta_j=0\}$. To assess the evidence against each $\bbH_j$, we consider an observable test statistic, such as a p-value $p_j$ or e-value $e_j$. To be concrete, for one null hypothesis $\bbH_j$, its p-value $p_j$ satisfies $\bbP(p_j\leq u)\leq u$ for all $u\in[0,1]$ if $\theta_j=0$, while its e-value $e_j$ \citep{vovk2021values,wang2022false} satisfies $\bbE[e_j]\leq 1$ if $\theta_j=0$. We reject the null hypothesis for a small p-value or large e-value. A multiple testing procedure aims to construct a rejection set $\calR \subseteq [m]$ based on the observed statistics that maintains proper control over a specified error metric. 

\subsection{Error Metrics}

A classical error metric is the family-wise error rate (FWER), defined as the probability of making more than one false rejection: 
\begin{equation*}
    \FWER(\calR):= \bbP\left(|\calH_0\cap \calR|\geq 1\right).
\end{equation*} 
\citet{lehmann2005generalizations} further generalized the FWER concept to the $k$-FWER to control the probability of making $k$ false rejections for $1\leq k\leq m$, defined as: 
\begin{equation*}
    \kFWER(\calR):= \bbP\left(|\calH_0\cap \calR|\geq k\right).
\end{equation*} 
The classical procedures to control FWER include the Bonferroni correction \citep{bonferroni1936teoria,dunn1961multiple}, the Holm procedure and their generalized versions \citep{lehmann2005generalizations}, as well as the closed testing framework \citep{marcus1976closed,guo2010stepwise}. While these methods provide rigorous FWER control under arbitrary dependence, they are often overly conservative, especially for large $m$.

The FDR introduced by \citet{benjamini1995controlling} offers an alternative perspective that focuses on the expected proportion of false discoveries among rejected hypotheses. Formally, the FDR of rejection set $\calR$ is 
\begin{equation}\label{eq:FDR_def}
    \FDR(\calR):= \bbE\left[\frac{|\calH_0\cap \calR|}{|\calR|\vee 1}\right]. 
\end{equation}
The emphasis of FDR on error proportion control renders it well-suited for large-scale multiple testing. This characteristic has contributed to various methodological developments in FDR-based approaches, including control procedures under different dependence structures \citep{benjamini2001control}, adaptive implementation schemes \citep{storey2002direct}, and methods from a Bayesian perspective \citep{sun2007oracle}. Due to its enhanced power while maintaining interpretability, FDR has become the most widely adopted error metric in large-scale testing and leads to numerous extensions such as the marginal FDR \citep{storey2003positive} for technical convenience, and the false discovery exceedance (FDX) \citep{genovese2004stochastic,genovese2006exceedance} for tail probability guarantees.

Despite their popularity, both FDR and its extensions are global metrics that quantify the \textit{average} error proportion within the rejection set. They do not safeguard the reliability of discoveries at the rejection boundary, where the risk of ``hitchhiking" nulls is high. 
\citet{soloff2024edge} has shifted focus to assessing the reliability of marginal rejections and developed a rejection procedure based on independent p-values. \citet{xiang2025frequentist} further formally introduced the boundary FDR (bFDR) concept and established a rigorous formulation within the p-value context. Here, we refine the original bFDR formulation to a unified representation.
\begin{definition}[bFDR] 
Let $I^{(1)}_{\calR}$ be the index of the ``least significant" discovery in $\calR$ (e.g., the one with the largest p-value or smallest e-value), and $I^{(1)}_{\varnothing}=0\notin \calH_0$ by convention. The bFDR is defined as: 
\begin{equation}\label{eq:bFDR_def}
    \bFDR(\calR):= \bbP\left(I^{(1)}_{\calR}\in\calH_0\right). 
\end{equation}
\end{definition}
The bFDR shows the probability that the last rejected hypothesis $\bbH_{I^{(1)}_{\calR}}$ is a false discovery.
The definition in \eqref{eq:bFDR_def} aligns with that of \citet{xiang2025frequentist}, while additionally accommodating the e-value scenario. 
While bFDR provides a necessary marginal guarantee, existing control methods (such as the SL procedure in \citep{soloff2024edge}) typically require independence of p-values, which limits their applicability in complex machine learning tasks.

\subsection{$k$-Boundary FDR}
While bFDR focuses on the single most marginal hypothesis in the rejection set, practical applications often require evaluating the joint reliability of a small batch of borderline rejections. We propose a more general metric, $k$-bFDR, which is defined as the probability of falsely rejecting the hypotheses corresponding to the ``$k$ least significant" discoveries in $\calR$, where $1\leq k\leq m$. 
\begin{definition}[$k$-bFDR] 
Let $I^{(k)}_{\calR}$ be the index of the $k$-th ``least significant" discovery in $\calR$ (e.g., the one with $k$-th largest p-value or $k$-th smallest e-value). The $k$-bFDR is defined as
\begin{equation*}\label{eq:kbFDR_def}
    \kbFDR(\calR):= \bbP\left(\left\{I^{(1)}_{\calR},\ldots,I^{(k)}_{\calR}\right\}\subseteq\calH_0\right). 
\end{equation*}   
\end{definition}

By convention, $k$-bFDR is 0 for any rejection set with fewer than $k$ elements (i.e., $|\calR|< k$) since such a set cannot commit $k$ false discoveries simultaneously. This aligns with the fact that the bFDR of an empty set is 0. 

When $k=1$, the $k$-bFDR reverts to the conventional bFDR. By extending the focus from solely on the most marginally rejected hypothesis to the $k$ least significant rejections, the $k$-bFDR captures higher-order uncertainty among borderline discoveries. This generalization provides researchers with enhanced flexibility to tailor the stringency of error evaluation according to specific application requirements. 

We now present several simple yet important properties of the $k$-bFDR and its relationship with classical error metrics. 

\begin{proposition}\label{prop:bfdr_kbfdr}
    (a) Under the global null, i.e., when all null hypotheses are true, $k$-bFDR is equivalent to $k$-FWER. In particular, bFDR is equivalent to the FDR. (b) In general settings, $k$-bFDR is no larger than $k$-FWER. 
\end{proposition}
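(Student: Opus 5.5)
The plan is to reduce every claim to an event-level comparison, working pointwise on the sample space and then taking probabilities. The basic tool is that $\{I^{(1)}_{\calR},\ldots,I^{(k)}_{\calR}\}$ is a subset of $\calR$ of size exactly $k$ whenever $|\calR|\geq k$. When $|\calR|<k$ the $k$-bFDR event is empty by the stated convention.

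For part (b), I will show the event inclusion
\begin{equation*}
\left\{\{I^{(1)}_{\calR},\ldots,I^{(k)}_{\calR}\}\subseteq\calH_0\right\}\subseteq\left\{|\calH_0\cap\calR|\geq k\right\}.
\end{equation*}
On the left event we must have $|\calR|\geq k$. The $k$ indices are then distinct elements of $\calR$, and all of them lie in $\calH_0$, so $|\calH_0\cap\calR|\geq k$. Monotonicity of $\bbP$ then gives $\kbFDR(\calR)\leq\kFWER(\calR)$ for every procedure and every configuration of $\calH_0$. The only point needing care is distinctness of the indices, and it holds because they are obtained by ranking the elements of $\calR$ with ties broken by a fixed rule.

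For part (a), under the global null $\calH_0=[m]$, so $\calH_0\cap\calR=\calR$. The right-hand event becomes $\{|\calR|\geq k\}$. On this event the $k$ least significant indices exist and automatically lie in $[m]=\calH_0$, so the two events coincide and $\kbFDR=\kFWER$.

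For the special case $k=1$, both quantities equal $\bbP(\calR\neq\varnothing)$. The convention $I^{(1)}_\varnothing=0\notin\calH_0$ ensures the empty set contributes nothing to bFDR. For FDR under the global null, the ratio $|\calH_0\cap\calR|/(|\calR|\vee1)$ equals $\mathbbm{1}\{\calR\neq\varnothing\}$, so $\FDR=\bbP(\calR\neq\varnothing)=\bFDR$. This is the familiar fact that FDR equals FWER under the global null.

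I expect no real obstacle. The only delicate steps are the bookkeeping at $|\calR|<k$ and $\calR=\varnothing$, which are handled by the stated conventions, and the tie-breaking remark that guarantees the $k$ boundary indices are distinct.
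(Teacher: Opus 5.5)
Your proposal is correct and follows essentially the same route as the paper: compare the two indicators pointwise (they are equal under the global null, and the $k$-bFDR indicator is dominated by the $k$-FWER indicator in general), then take expectations. The only difference is that you verify directly that $\FDR=\bFDR=\bbP(\calR\neq\varnothing)$ under the global null, whereas the paper defers that step to the classical FDR/FWER argument of Benjamini and Hochberg; your explicit handling of the $|\calR|<k$ convention and tie-breaking is also slightly more careful than the paper's.
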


The proof of \cref{prop:bfdr_kbfdr} and any other necessary proofs will be detailed in \cref{appendix:proofs}. \cref{prop:bfdr_kbfdr} demonstrates that under the global null, bFDR-type metrics exhibit consistent interpretation with traditional FWER-type control. More broadly, in general settings, the $k$-bFDR serves as a more relaxed error criterion than $k$-FWER, allowing for enhanced power. Consequently, $k$-bFDR provides an alternative metric for multiple hypothesis testing, effectively balancing error control with increased power while maintaining clear probabilistic interpretations. 

The relationship between bFDR and FDR exhibits fundamental complexity \citep{xiang2025frequentist}. \citet{gao2026conformal} found that the FDR theoretically is no larger than the bFDR under a monotonicity assumption, where non-null p-values tend to be smaller. However, the validity of the monotonicity assumption in practical applications remains uncertain. When monotonicity is violated, the rejection at the boundary characterized by bFDR may not necessarily correspond to the theoretically least compelling scenario. Rather than comparing these metrics in isolation, we further investigate the procedures for controlling them through the lens of the closure principle in \cref{sec:relationship_to_others}.

\section{Domino: $k$-bFDR Control Procedure}

In this section, we introduce the Domino framework for $k$-bFDR control. We first present the core algorithm and its theoretical guarantees, followed by a discussion of its practical implementation and comparisons to other methods.

\subsection{Control of $k$-bFDR}

The closure principle is a fundamental pillar in multiple testing, providing a systematic way to construct procedures with rigorous control of error metrics under arbitrary dependence \citep{marcus1976closed, bretz2009graphical,xu2025bringing}. Inspired by this principle, we propose a novel procedure named Domino to control $k$-bFDR. Unlike traditional closed testing, Domino focuses on the boundary of discoveries and is compatible with both p-values and e-values to enhance its practical applicability. 

The building block of our framework is the $k$-local test. For any index subset $S\subseteq[m]$ with $|S|\geq k$, the intersection null hypothesis is $\bbH_S:=\cap_{j\in S}\bbH_{j}$. A valid $k$-local test is required to safeguard against the inclusion of $k$ or more false discoveries within the subset $S$.
\begin{definition}[Valid $k$-local test]\label{def:k-local_test}
    For a subset $S\subseteq[m]$ with $|S|\geq k$, let $\phi_S^k\in\{0,1\}$ be a test statistic for $\bbH_S$ with $\phi_S^k=1$ indicating significance if and only if at least $k$ of the individual hypotheses are found significant. We say $\phi_S^k$ is a \textbf{valid $k$-local test} at level $\alpha$ if it satisfies $\mathbb{P}(\phi_S^k = 1) \le \alpha$. 
\end{definition}

\begin{remark}
The closed testing was originally developed for p-value-based testing \citep{guo2010stepwise,goeman2011multiple} and has been extended to accommodate e-value-based testing methods \citep{hartog2025family,xu2025bringing}. Note that Definition~\ref{def:k-local_test} is agnostic to the underlying type of evidence. The only essential condition is that the probability of falsely declaring significance does not exceed $\alpha$ under the intersection null.
\end{remark}

With the definition of the $k$-local test, we now introduce the Domino condition. To streamline our presentation, we first focus on the scenario using p-values and then discuss the implementation with e-values. Consider p-value $p_j$ associated evidence for hypothesis $\bbH_j$ for $j\in[m]$, and let $\pi: \{1, \dots, m\} \to \{1, \dots, m\}$ be a permutation that sorts the p-values such that $p_{\pi(1)} \le p_{\pi(2)} \le \dots \le p_{\pi(m)}$.

For a candidate rejection size $r$, we define the candidate rejection set as $\calR_r = \{ \pi(1), \dots, \pi(r) \}$. The corresponding marginal index set $\mathcal{M}_{r,k}$, standing for the $k$ least significant discoveries in $\mathcal{R}_r$, is,
\begin{equation}\label{eq:marginal_set}
    \mathcal{M}_{r,k} = \{ \pi(r-k+1), \pi(r-k+2), \dots, \pi(r) \}.
\end{equation}
The Domino procedure identifies the largest $r$ such that every intersection hypothesis containing $\mathcal{M}_{r,k}$ is rejected by a valid $k$-local test.

\begin{definition}[Domino condition]\label{def:domino_condition}
A candidate size $r$ satisfies the $k$-boundary closure condition if, for any subset of indices $S \subseteq [m]$ that contains the marginal set $\mathcal{M}_{r,k}$, the valid $k$-local test $\phi_S^k$ rejects:
\begin{equation}\label{eq:kbfdr_condition}
    \mathcal{C}(r):~\phi_{S}^k=1, \text{ for all } S\supseteq \mathcal{M}_{r,k}. 
\end{equation}
\end{definition}

We focus on the rank-consecutive marginal set $\mathcal{M}_{r,k}$ in \eqref{eq:marginal_set} since its failure to satisfy \eqref{eq:kbfdr_condition} implies that no other non-consecutive size-$k$ set containing $\pi(r-k+1)$ as the minimal-rank element can fulfill the condition. If all $r\geq k$ fail to fulfill \eqref{eq:kbfdr_condition} in \cref{def:domino_condition}, the final rejection set reduces to the trivial set with $k-1$ most significant hypotheses: $\calR=\{j\in[m]:p_j\leq p_{\pi(k-1)}\}$ with $p_{\pi(0)}:=0$. 

\begin{algorithm}[tb]
   \caption{Domino based on p-values (Domino-P) }
   \label{alg:domino-p}
\begin{algorithmic}[1]
    \STATE {\bfseries Input:} null hypotheses $\bbH_1,\ldots,\bbH_m$ with observed p-values $p_1,\ldots,p_m$, target $k$-bFDR level $\alpha$, valid $k$-local test $\phi_S^k$ for $S\subseteq [m]$. 
    \STATE Sort the p-values in ascending order $p_{\pi(1)} \le p_{\pi(2)} \le \dots \le p_{\pi(m)}$; 
    \STATE Initialize the rejection set $\calR=\left\{j:p_j\leq p_{\pi(k-1)}\right\}$ and the index $r=m$; 
    \WHILE{$|\calR|<k$ \AND $r\geq k$}
        \STATE Let $\mathcal{M}_{r,k} = \{\pi(r-k+1),\pi(r-k+2),\ldots,\pi(r)\}$; 
        \IF{$\mathcal{C}(r)$ in \eqref{eq:kbfdr_condition} holds for $\mathcal{M}_{r,k}$ }
        \STATE Update $\calR=\left\{j:p_j\leq p_{\pi(r)}\right\}$;
        \STATE \textbf{break}
        \ENDIF
    \STATE Update $r\leftarrow r-1$;
    \ENDWHILE
    \STATE {\bfseries Output:} rejection set $\calR$. 
\end{algorithmic}
\end{algorithm}

The whole Domino algorithm based on p-values is summarized in \cref{alg:domino-p}. Following a similar sketch, the Domino algorithm based on e-values is detailed in \cref{appendix:algorithm_domino-e}. The Domino algorithm focuses on determining the boundary of the rejection set to achieve $k$-bFDR control, as it depends only on these boundary elements. The core lies in identifying the marginal index set that meets the specified condition in \eqref{eq:kbfdr_condition}, and then rejecting all hypotheses that the corresponding statistics indicate more significance than these boundary ones, akin to toppling a chain of dominoes where pushing the initial dominoes triggers the fall of all subsequent ones. 

Applying the closure principle, the probability of falsely rejecting the marginal set satisfying \eqref{eq:kbfdr_condition} can be proved to be bounded, thereby achieving $k$-bFDR control for the whole rejection set. The following theorem establishes strict $k$-bFDR control guarantees for the proposed Domino method. 

\begin{theorem}\label{thm:bFDR_control}
    The rejection set $\calR$ produced by the Domino algorithm satisfies $\kbFDR(\calR)\leq\alpha$. 
\end{theorem}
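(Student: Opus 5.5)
The plan is the standard closed-testing argument, with the single test of the true-null set $\calH_0$ playing the role of the local test that must be passed on the bad event. Write $E$ for the event that the $k$ least significant discoveries of the Domino output $\calR$ are all true nulls:
\[
E=\{\,|\calR|\ge k \text{ and } \{I^{(1)}_{\calR},\ldots,I^{(k)}_{\calR}\}\subseteq\calH_0\,\}.
\]
By the convention below the definition of $k$-bFDR, we have $\kbFDR(\calR)=\bbP(E)$. If $|\calH_0|<k$, then $E$ is impossible and there is nothing to prove. So assume $|\calH_0|\ge k$, so that the local test $\phi^k_{\calH_0}$ is defined. The goal is the inclusion
\[
E\subseteq\{\phi^k_{\calH_0}=1\}.
\]
Once this holds, validity of the $k$-local test in \cref{def:k-local_test}, applied to the intersection null $\bbH_{\calH_0}$ (which is true), gives $\bbP(E)\le\bbP(\phi^k_{\calH_0}=1)\le\alpha$.

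To prove the inclusion, I will unpack the algorithm. If the while loop never breaks, the output is $\{j:p_j\le p_{\pi(k-1)}\}$. When there are no ties, this set has size $k-1<k$, so $E$ fails. On $E$, therefore, the loop must have broken at some index $\hat r\ge k$ with $\mathcal{C}(\hat r)$ holding. The output is then $\calR=\{j:p_j\le p_{\pi(\hat r)}\}$, and with no ties at the boundary this equals $\calR_{\hat r}=\{\pi(1),\ldots,\pi(\hat r)\}$. Its $k$ least significant elements are exactly $\mathcal{M}_{\hat r,k}$. So on $E$ we have $\mathcal{M}_{\hat r,k}\subseteq\calH_0$.

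Since $\mathcal{C}(\hat r)$ requires $\phi^k_S=1$ for every $S\supseteq\mathcal{M}_{\hat r,k}$, we may take $S=\calH_0$ and obtain $\phi^k_{\calH_0}=1$. This is the key closure step. The argument involves only this deterministic event inclusion, so no assumption on the dependence among the p-values is needed. The random index $\hat r$ causes no difficulty, because the fixed test $\phi^k_{\calH_0}$ is the only test whose probability we bound.

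The main obstacle I expect is bookkeeping with ties. If $p_{\pi(\hat r)}=p_{\pi(\hat r+1)}$, then $\{j:p_j\le p_{\pi(\hat r)}\}$ is strictly larger than $\calR_{\hat r}$, and its $k$ least significant elements depend on how ties are broken. I would handle this by fixing the ordering $\pi$ once, using it both in the definition of $I^{(i)}_{\calR}$ and in the algorithm, and identifying the output with $\calR_{\hat r}$. An alternative is to assume the p-values are almost surely distinct. The same issue affects the degenerate output of size $k-1$. The e-value version is identical, with the ordering reversed.
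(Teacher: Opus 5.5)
Your proposal is correct and follows essentially the same route as the paper. On the event that the marginal set $\mathcal{M}_{\hat r,k}$ lies in $\calH_0$, the Domino condition with $S=\calH_0$ forces $\phi^k_{\calH_0}=1$, so validity of this single fixed local test bounds the $k$-bFDR by $\alpha$; the cases $|\calH_0|<k$ and the trivial size-$(k-1)$ output are handled separately, and your explicit treatment of ties (fixing $\pi$ or assuming a.s.\ distinct p-values) addresses a point the paper's proof leaves implicit.
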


\cref{thm:bFDR_control} shows that Domino achieves rigorous $k$-bFDR control without requiring any structural assumptions on the dependence among p-values or e-values. In the specific case of $k = 1$, Domino provides a robust guarantee for bFDR control under arbitrary dependence. This distinguishes our approach from the SL procedure in \citep{soloff2024edge}, which primarily ensures bFDR control under the independence assumption for p-values. By leveraging the closure principle, our framework not only extends the applicability to e-values but also broadens the applicability of the bFDR control in more general and complex settings. 

\subsection{Choices of Valid $k$-Local Tests}

The Domino framework is flexible, allowing for various forms of the $k$-local test $\phi_S^k$ that guarantee the probability of falsely rejecting $\bbH_S$ at $\alpha$. It is necessary to select a valid $k$-local test $\phi_S^k$ depending on the type of evidence available and the structural assumptions.

Under arbitrary dependence, a straightforward option is the generalized Bonferroni method  \citep{lehmann2005generalizations}, which employs an adjusted critical threshold of $k\alpha/|S|$ in p-value-based testing for $k$-FWER control, i.e.,
\begin{equation}\label{eq:k-Bonferroni}
    \phi_S^{k, \text{Bonf}} = 1 \iff \frac{|S|}{k}p_{(k:S)} \le \alpha,
\end{equation}
where $p_{(k:S)}$ denotes the $k$-th smallest p-value in subset $S$. The generalized Holm procedure \citep{lehmann2005generalizations} for $k$-FWER control can also serve as a valid $k$-local test, with decision rules identical to \eqref{eq:k-Bonferroni}. Additionally, if the p-values are assumed independent or certain dependence structures, more effective tests can be employed. For example, under independence or positive regression dependence on a subset (PRDS; \citealp{benjamini2001control}), the Simes test \citep{simes1986improved} provides a valid local test for $k=1$:
\begin{equation*}
    \phi_S^{1, \text{Simes}} = 1 \iff \min_{1\leq j\leq |S|} \frac{|S|}{j}p_{(j:S)} \le \alpha.
\end{equation*}
Its generalized variant \citep{sarkar2008generalizing} can be employed as a valid $k$-local test, resulting in a more powerful procedure. 

When conducting hypothesis testing using e-values, we develop a valid $k$-local test by the e-closure principle. Specifically, for a subset $S\subseteq[m]$ with $|S|\geq k$, let 
\begin{equation}\label{eq:k-local_test_from_e-closure_principle}
    \begin{aligned}
        \phi_S^{k,\text{e-closure}}=1 & \iff \exists~W_S \subseteq S \text{ with } |W_S|\geq k, \text{ such that }\\e_T \geq &\frac{\mathbbm{1}\{|T\cap W_S|\geq k\}}{\alpha} \text { for all } T \subseteq S,
    \end{aligned}
\end{equation}
where $e_T$ denotes the e-value for $\bbH_T$. 

\begin{proposition}\label{prop:k-local_test_from_e-closure_principle}
    The test $\phi_S^{k,\text{e-closure}}$ in \eqref{eq:k-local_test_from_e-closure_principle} is a valid $k$-local test at level $\alpha$. 
\end{proposition}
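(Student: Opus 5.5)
Validity here has to be read as in \cref{def:k-local_test}: the test should fire with probability at most $\alpha$ when the relevant nulls are true. I read this as saying that, whenever at least $k$ of the hypotheses in $S$ are true nulls, $\bbP(\phi_S^{k,\text{e-closure}}=1 \text{ and the witness } W_S \text{ shares at least } k \text{ elements with } \calH_0)\le\alpha$. In the intersection-null case $S\subseteq\calH_0$ this is just $\bbP(\phi_S^k=1)\le\alpha$. The plan is to reduce the event $\{\phi_S^{k,\text{e-closure}}=1\}$ to a single event about one e-value and then apply Markov's inequality. Throughout I assume, as in the e-closure literature \citep{xu2025bringing,hartog2025family}, that for every $T\subseteq[m]$ the intersection e-value $e_T$ is a valid e-value for $\bbH_T$, i.e. $\bbE[e_T]\le 1$ whenever $T\subseteq\calH_0$.

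\textbf{Step 1 (choose the true-null subset).} Set $T^\ast:=S\cap\calH_0$. Under the intersection null $\bbH_S$ we have $T^\ast=S$. More generally, I only consider the case where the error event can happen at all, namely $|T^\ast|\ge k$. Since $T^\ast$ is deterministic and $T^\ast\subseteq\calH_0$, the quantity $e_{T^\ast}$ is a valid e-value, so $\bbE[e_{T^\ast}]\le 1$.

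\textbf{Step 2 (inclusion of events).} Suppose $\phi_S^{k,\text{e-closure}}=1$, with a data-dependent witness $W_S\subseteq S$, $|W_S|\ge k$. The condition in \eqref{eq:k-local_test_from_e-closure_principle} holds for every $T\subseteq S$, so in particular it holds for $T=T^\ast$, giving $e_{T^\ast}\ge \mathbbm{1}\{|T^\ast\cap W_S|\ge k\}/\alpha$. Under the intersection null, $T^\ast=S\supseteq W_S$, so $|T^\ast\cap W_S|=|W_S|\ge k$ and the indicator equals $1$. Hence
\begin{equation*}
\{\phi_S^{k,\text{e-closure}}=1\}\subseteq\{e_{T^\ast}\ge 1/\alpha\}.
\end{equation*}
In the general version the same inclusion holds on the error event $\{|W_S\cap\calH_0|\ge k\}$, because $W_S\cap\calH_0 = W_S\cap T^\ast$.

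\textbf{Step 3 (Markov).} Markov's inequality then gives
\begin{equation*}
\bbP(\phi_S^{k,\text{e-closure}}=1)\le\bbP(e_{T^\ast}\ge 1/\alpha)\le\alpha\,\bbE[e_{T^\ast}]\le\alpha.
\end{equation*}
This holds under arbitrary dependence, because the argument uses only one fixed e-value. The feature that makes this work is that the condition is required for all $T$ simultaneously: a random witness $W_S$ costs nothing, since the true-null set $T^\ast$ is not random.

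\textbf{Main obstacle.} The step I expect to need the most care is pinning down exactly which error event ``valid'' refers to when $S$ contains non-nulls. This matters because Domino's closure argument in \cref{thm:bFDR_control} applies the local test to $S$ that may include non-nulls. My first attempt is to state and prove the general form above, which needs only $|T^\ast\cap W_S|\ge k$ on the error event. If the paper's definition is literally the intersection-null one, then the Step 2 case $T^\ast=S$ already suffices.
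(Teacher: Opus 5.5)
Your proof is correct, and it is more direct than the paper's. The paper does not apply Markov's inequality itself. It first recalls that, by the e-closure principle of \citet{xu2025bringing} with error function $\mathbbm{1}\{|S\cap\calR|\ge k\}$, any set $\calR$ satisfying $e_T\ge \mathbbm{1}\{|T\cap\calR|\ge k\}/\alpha$ for all $T$ has $k$-FWER at most $\alpha$. It then observes that $\phi_S^{k,\text{e-closure}}=1$ says exactly that such an e-closed set $W_S\subseteq S$ with $|W_S|\ge k$ exists. Under $\bbH_S$ this is a $k$-FWER error, so its probability is at most $\alpha$.

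Your argument opens that black box. The e-closure guarantee is itself proved by evaluating the condition at the (unknown) true-null set and using $\bbE[e_{T}]\le 1$ there, which is exactly your Steps 2--3. Your route buys two things:
\begin{itemize}
\item It is self-contained.
\item It handles the data-dependent witness explicitly. The paper implicitly needs the e-closure bound to hold uniformly over all e-closed sets; you make this visible by bounding the fixed event $\{e_{T^\ast}\ge 1/\alpha\}$, whichever $W_S$ is chosen.
\end{itemize}
It also makes clear that, in the intersection-null case, only $T=S$ matters. In fact the inclusion $\{\phi_S^{k,\text{e-closure}}=1\}\subseteq\{e_S\ge 1/\alpha\}$ holds deterministically, since $|S\cap W_S|=|W_S|\ge k$. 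The paper's route instead buys the conceptual link to the e-closure framework, which it reuses in \cref{prop:equivalence_to_e-closure_principle}.

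Your ``main obstacle'' does not arise. \cref{def:k-local_test} is an intersection-null requirement, as the remark following it states. Moreover, the proof of \cref{thm:bFDR_control} invokes validity only at $S=\calH_0$, where $\bbH_S$ holds. So your partial-null extension is correct but not needed.
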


For the special case $k=1$, a valid local test $\phi_S^1$ can simplify to aggregate the p-values or e-values within the subset $S$ into a single statistic to test $\bbH_S$. Under arbitrary dependence, \citet{vovk2020combining} explored the combination of multiple p-values using generalized means and demonstrated that the scaled harmonic mean of p-values for $|S|\geq 2$ is valid: 
\begin{equation}\label{eq:harmonic}
    \phi_S^{1,\text{Harmo}} = 1 \Longleftarrow  e\ln|S|\frac{|S|}{\sum_{j\in S}\frac{1}{p_j}}\leq \alpha, 
\end{equation}
where $e$ is the Euler’s number and the scaling factor $e\ln|S|$ ensures validity. For the case when $|S|=1$, the test can be performed directly using the single p-value itself. If independence is assumed, the Cauchy combination method in \citep{liu2020cauchy} provides a more powerful approach for p-value aggregation to capture sparse signals more effectively. In the context of e-values, some straightforward combination strategies are allowed for local testing. Under arbitrary dependence, the simple averaging approach could constitute a valid 1-local test \citep{vovk2021values}. That is, 
\begin{equation*}
\phi_S^{1,\text{eAve}} \iff \frac{1}{|S|}\sum_{j \in S} e_j \ge 1/\alpha.   
\end{equation*}
Under independence, the multiplicative aggregation of e-values can lead to significant power gains \citep{xu2025bringing}.

\subsection{Computational Implementations of Domino}\label{sec:implementation}

At first glance, verifying the Domino condition $\mathcal{C}(r)$ in \eqref{eq:kbfdr_condition} presents a heavy computational challenge, as it requires checking $2^{m-k}$ possible subsets $S \supseteq \mathcal{M}_{r,k}$. However, for most common choices of $k$-local tests, this procedure can be significantly simplified in practice. By leveraging the ordering of test statistics and the structural properties of the test $\phi_S^k$, we need only verify the condition in \eqref{eq:kbfdr_condition} for a reduced subset collection $\{S\}$. 

For example, consider the generalized Bonferroni procedure as the $k$-local test $\phi_S^{k,\text{Bonf}}$, which only checks the $k$-th smallest p-value in the subset $S$. For a marginal set $\calM_{r,k}$, let its $k$-th smallest p-value be $p_{\pi(r)}$. Any subset $S$ formed by augmenting $\calM_{k,r}$ with ``weaker" hypotheses (those with indices $\{\pi(r+1),\ldots,\pi(m)\}$) will keep $p_{\pi(r)}$ as $k$-th order statistic, leaving the decision criteria unchanged, thus obviating the need for additional verification. Consequently, it suffices to verify the condition only for subsets $S$ constructed by sequentially adding elements from $\{\pi(1),\ldots,\pi(r-k)\}$ in reverse order to $\calM_{r,k}$. This reduces the exponential search over $S$ to a linear search, and the whole algorithm is detailed in \cref{appendix:fast_implementation_bonferroni}.

We develop computationally efficient implementations of Domino in \cref{alg:domino-p} for cases where the $k$-local test employs either the generalized Bonferroni procedure $\phi_S^{k, \text{Bonf}}$ or the p-value combination by scaled harmonic mean $\phi_S^{1, \text{Harmo}}$. As demonstrated in the \cref{appendix:fast_implementation}, these optimizations reduce the computational complexity from exponential order $O(2^m)$ to a tractable polynomial order $O(m^2)$.

\section{Relationship to Existing Paradigms}\label{sec:relationship_to_others}

In this section, we first discuss the relationship between the proposed Domino and closed testing for $k$-FWER control, followed by a formal connection to the e-closure principle  \citep{xu2025bringing} and finally, a comparison between bFDR and FDR control through the lens of rejection set properties. 

\paragraph{Domino v.s. closed testing.} In closed testing for $k$-FWER control, a hypothesis $\bbH_j$ is rejected if \citep{guo2010stepwise}
\begin{equation*}
    \phi_S^k=1,~ \text{for all}~ S\ni j~\text{with }|S|\geq k \text{ satisfying } p_j\geq p_{(k:S)}. 
\end{equation*}
This implies that enforcing $k$-FWER control requires every individual element in the rejection set to adhere to the closure principle across all relevant subsets. In contrast, Domino for achieving $k$-bFDR control only requires the $k$ marginal elements of the rejection set $\calR$ to satisfy the closure condition \eqref{eq:kbfdr_condition}. Consequently, Domino is less conservative than $k$-FWER-controlling procedures, leading to a larger rejection set while maintaining boundary reliability. This distinction in rejection patterns reflects the fundamental shift in the error metric: while $k$-FWER ensures the entire rejection set against any $k$ false discoveries, $k$-bFDR focuses its statistical budget on the least significant part of the discovery list.

\paragraph{Connection to e-closure principle.} The e-closure principle \citep{xu2025bringing} has recently emerged as a unified framework for controlling both expectation-based (e.g., FDR) and tail-probability-based error rates using e-values. Since the $k$-bFDR can be expressed in expectation form as $\kbFDR(\calR)=\bbE\left[\mathbbm{1}\left\{\left\{I^{(1)}_{\calR},\ldots,I^{(k)}_{\calR}\right\}\subseteq\calH_0\right\} \right]$, the e-closure principle framework can also achieve $k$-bFDR control. We find that when taking $\phi_S^{k,\text{e-closure}}$ in \eqref{eq:k-local_test_from_e-closure_principle} as the $k$-local test in Domino, it is equivalent between Domino and the e-closure principle in the sense that two principles yield identical rejection sets. 

\begin{proposition}\label{prop:equivalence_to_e-closure_principle}
    Consider the error rate function $f_S^{\kbFDR}(\calR)=\mathbbm{1}\left\{\left\{I^{(1)}_{\calR},\ldots,I^{(k)}_{\calR}\right\}\subseteq S\right\}$ in the e-closure principle, the e-closure principle is identical to Domino when taking $\phi_S^{k,\text{e-closure}}$ in \eqref{eq:k-local_test_from_e-closure_principle} as the $k$-local test. 
\end{proposition}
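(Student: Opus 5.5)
The plan is to reduce both procedures to the same family of candidate sets and then show that, on each candidate, their acceptance criteria coincide. Recall that the e-closure principle of \citet{xu2025bringing}, given an e-collection $\{e_S\}_{S\subseteq[m]}$ with $e_S$ an e-value for $\bbH_S$, declares a set $\calR$ admissible (``e-closed'') if
\begin{equation*}
e_S\geq f_S(\calR)/\alpha \quad\text{for all } S\subseteq[m].
\end{equation*}
It then outputs the largest admissible set among the candidates under consideration. Domino only considers the nested family $\calR_r=\{\pi(1),\ldots,\pi(r)\}$, ordered by significance. I will therefore compare the two principles on this family, and for each $r$ show that $\calR_r$ is e-closed if and only if $\mathcal{C}(r)$ holds with $\phi^{k,\text{e-closure}}_S$.

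First I compute the error function on a candidate. For $r\geq k$, the $k$ least significant discoveries of $\calR_r$ are exactly $\calM_{r,k}$, so $f^{\kbFDR}_S(\calR_r)=\mathbbm{1}\{\calM_{r,k}\subseteq S\}$. Since $e_S\geq 0$, e-closedness of $\calR_r$ then reduces to the single requirement
\begin{equation*}
e_S\geq 1/\alpha \quad\text{for all } S\supseteq\calM_{r,k}.
\end{equation*}
For $r<k$ we have $f\equiv 0$, so every such set is trivially e-closed. This matches Domino's convention of always rejecting the $k-1$ most significant hypotheses.

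Next I prove the two directions of the equivalence.

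\emph{($\Leftarrow$)} Suppose $\mathcal{C}(r)$ holds and fix $S\supseteq\calM_{r,k}$. The definition \eqref{eq:k-local_test_from_e-closure_principle} supplies some $W_S\subseteq S$ with $|W_S|\geq k$. Taking $T=S$ gives $|T\cap W_S|=|W_S|\geq k$, so $e_S\geq 1/\alpha$. Hence $\calR_r$ is e-closed.

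\emph{($\Rightarrow$)} Suppose $\calR_r$ is e-closed and fix $S\supseteq\calM_{r,k}$. Choose the witness $W_S=\calM_{r,k}$, which has size exactly $k$. For any $T\subseteq S$, the condition $|T\cap W_S|\geq k$ forces $T\supseteq\calM_{r,k}$, and then $e_T\geq 1/\alpha$ by e-closedness. For all other $T$ the required bound is $e_T\geq 0$, which holds trivially. Thus $\phi^{k,\text{e-closure}}_S=1$ for every $S\supseteq\calM_{r,k}$, which is $\mathcal{C}(r)$.

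Finally I match the selection rules. Domino scans $r$ from $m$ downward and stops at the first $r$ satisfying $\mathcal{C}(r)$, i.e.\ the largest such $r$. By the equivalence, this is the largest e-closed candidate $\calR_r$. If no $r\geq k$ qualifies, both procedures return the trivial set $\{j: p_j\leq p_{\pi(k-1)}\}$.

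The main obstacle is this final step: making precise which collection of candidate sets the e-closure principle optimizes over, so that ``largest e-closed set'' is well defined and agrees with Domino. This needs care with ties in the ordering, and with the question of whether non-nested candidates are allowed. If arbitrary subsets are allowed, I would use the remark after Definition~\ref{def:domino_condition}: if the rank-consecutive set $\calM_{r,k}$ fails the condition, no other set with the same minimal-rank boundary can pass. Together with the observation that the e-closure condition depends on $\calR$ only through its bottom-$k$ set, this lets one replace any e-closed $\calR$ by a nested $\calR_r$ that is at least as large, so the maximizers agree.
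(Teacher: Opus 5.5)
Your equivalence argument is correct and is essentially the paper's proof. Like the paper, you reduce e-closedness of $\calR_r$ to $e_S\ge 1/\alpha$ for all $S\supseteq\calM_{r,k}$, take $T=S$ for one direction, and use the witness $W_S=\calM_{r,k}$ for the other. The paper also simply lets the e-closure principle return the largest admissible $r$ among the nested candidates $\calR_r$. So your first three steps already prove the proposition in the sense the paper intends.

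Your proposed fix for non-nested candidates does not work for $k\ge 2$. The remark after Definition~\ref{def:domino_condition} tacitly assumes a monotonicity of the local test in the ordering, and an arbitrary e-collection need not have it. Here is a counterexample:
\begin{itemize}
\item Take $m=3$, $k=2$, $\alpha<1$ and $e_1>e_2>e_3$ with $e_1\ge 1/\alpha$.
\item Use the valid e-collection $e_{\{1,3\}}=e_{\{1,2,3\}}=e_1$ and $e_{\{1,2\}}=e_{\{2,3\}}=1$. It is valid because $\bbH_1$ holds under both of the first two intersections.
\item Then $\{1,3\}$ is e-closed: its bottom-$2$ set is $\{1,3\}$, and both of its supersets have e-value $e_1$.
\item But $\calR_2$ and $\calR_3$ are not e-closed, since they need $e_{\{1,2\}}$, respectively $e_{\{2,3\}}$, to be at least $1/\alpha$. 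Domino therefore returns $\{1\}$.
\end{itemize}
Thus the consecutive set $\calM_{2,2}$ fails $\mathcal{C}(2)$, while the non-consecutive $\{1,3\}$ with the same top element passes. The largest e-closed set over all subsets is then strictly larger than Domino's output.

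For $k=1$ your replacement argument does work. There the condition depends only on $I^{(1)}_{\calR}$, and $\calR$ is contained in the nested set ending at $I^{(1)}_{\calR}$, which has the same boundary element. So you should drop that last step and state explicitly, as the paper implicitly does, that the e-closure principle is applied to the nested family $\{\calR_r\}$. With that convention your proof is complete.
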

As a result, the e-closure principle for $k$-bFDR control can be considered as a special case of the Domino framework. Furthermore, one can anticipate that Domino's modularity allows for the integration of diverse $k$-local tests that may strictly outperform the standard e-closure principle.

\paragraph{Discussion of bFDR and FDR.} We further elucidate the distinction between the two criteria, bFDR and FDR, by carefully analyzing the rejection process through the e-closure principle to gain deeper insights into their distinct quality control paradigms. Under this framework, the rejection set $\calR^{\bFDR}$ for bFDR control satisfies 
\begin{equation*}
    e_S \geq 1/\alpha \text{ for all }S\ni I^{(1)}_{\calR^{\bFDR}}, 
\end{equation*}
where $e_S$ is the e-value for $\bbH_S$. 
Meanwhile, the rejection set $\calR^{\FDR}$ for FDR control must satisfy 
\begin{equation*}
    e_S \geq \frac{|\calR^{\FDR}\cap S|}{|\calR^{\FDR}|}\frac{1}{\alpha} \text{ for all } S\cap \calR^{\FDR}\neq \varnothing. 
\end{equation*}
These differing requirements highlight the distinct emphases of bFDR and FDR. Concretely, the bFDR control targets only the least significant element in the rejection set by enforcing a fixed threshold of $1/\alpha$, emphasizing \textit{marginal quality}. In contrast, the FDR control reflects the \textit{average quality} by employing an adaptive threshold for every element in the rejection set weighted by the proportion of rejected hypotheses within $S$, which is generally smaller than $1/\alpha$. Moreover, this discount threshold in FDR control allows highly significant signals to ``mask" marginal errors, resulting in those hitchhiking marginal discoveries. Thus, bFDR-type control provides a more robust solution for applications where boundary uncertainty carries high risk.

\section{Numerical Experiments}

In this section, we evaluate the performance of the Domino framework through synthetic simulations and real data analysis. We focus on three primary metrics: (i) $k$-bFDR, which assesses marginal reliability; (ii) true discovery rate (TDR), defined as the expected proportion of true positives among rejections, reflecting the discovery quality; (iii) Power, defined as the expected proportion of true positives that are correctly rejected, which reports the ability to detect true signals. All results are evaluated over 100 replications. The implementation code for the numerical experiments in this paper is available at \url{https://github.com/WenntaoZhang/Domino}. 

Unless specified in the subsequent analysis, we adopt the following configurations: \textbf{Domino-P} ($k=1$) employs the Simes local test when the null p-values are independent or PRDS, and switches to the Harmonic mean p-value combination for arbitrary or unknown dependence; \textbf{\mbox{Domino-P}} ($k>1$) uses the generalized Bonferroni $k$-local test; \textbf{Domino-E} adopts the $k$-local test developed from the e-closure principle and uses the arithmetic average to combine e-values. 

\subsection{Synthetic Data} \label{sec:simulation}

We generate data from a multivariate normal distribution $\bm{X} = (X_1,\cdots, X_m) \sim \mathcal{N}(\bm{\mu},\bm{\Sigma})$.
The null hypotheses are $\bbH_{j}: \mu_j \leq 0$ for $j\in[m]$. The true state $\theta_j\in\{0,1\}$ follows a latent Bernoulli distribution $\theta_j \sim \text{Bern}(\pi_1)$, where $\pi_1$ is the sparsity hyperparameter. For null hypotheses, the true mean is set to be $\mu_j=0$; for alternatives, the signal strength $\mu_j$ is i.i.d from a normal distribution $\mathcal{N}(\mu_c,\sigma^2)$ truncated to $(0,+\infty)$. The covariance matrix $\bm\Sigma$ is configured with $\Sigma_{jj} = \sigma^2$ and $\Sigma_{ij} = \sigma^2\rho$ for $i \neq j$, where $\rho \in[-\frac{1}{m-1},1)$ is the correlation coefficient. The p-values are calculated based on a one-sided Z-test $p_j = \Phi(-X_j/\sigma)$, while the e-values are set as the likelihood ratio $e_j = \text{exp}\{\frac{\mu_cX_j - 0.5\mu_c^2}{\sigma^2}\}$. Unless otherwise specified, we set $m=100$, $\sigma=1$, the hyperparameter $\pi_1=0.2$ and the nominal level $\alpha=0.05$. 

We first evaluate the performance of Domino across different local tests for $k$-bFDR control with $k \in \{1,2,3\}$ in \cref{tab:domino_comparison} under independence ($\rho=0$) and weak positive correlation ($\rho = 0.25$). As illustrated, all Domino variants successfully control the corresponding $k$-bFDR while maintaining a high TDR. Under the bFDR constraint, using the Simes-based local test $\phi^{1,\text{Simes}}$ yields the superior power by exploiting the independence structure of the p-values. The Domino using local test $\phi^{1,\text{Harmo}}$ is inherently less sensitive to i.i.d settings; however, it offers the distinct advantage of maintaining validity across arbitrary dependence structure, as demonstrated in \cref{tab:negativecor}. Domino-E methods yield a bit more conservative, thus safer, rejection set than the p-value-based counterparts. In terms of relaxed boundary error metric ($k \in \{2,3\}$), both Domino-P and Domino-E yield rejection sets with higher power, illustrating the trade-off between precise, high-quality discoveries and detection ability.

\begin{table}[tb]

\caption{The bFDR, TDR (\%), and Power (\%) of different Domino methods at $\alpha=0.05$, $\mu_c = 3$, and $\pi_1 = 0.2$.}
\label{tab:domino_comparison}
\vskip 0.1in
\begin{center}
\begin{small}
\setlength{\tabcolsep}{4pt}
\begin{tabular}{lcccccc}
\toprule
& \multicolumn{3}{c}{$\rho=0$} & \multicolumn{3}{c}{$\rho=0.25$} \\
\cmidrule(lr){2-4} \cmidrule(lr){5-7}
Local test & $k$-bFDR & TDR & Power & $k$-bFDR & TDR & Power \\
\midrule
$\phi^{1,\text{Simes}}$     & 0.00 & 98.9 & 42.7 & 0.01 & 99.4 & 41.9 \\
$\phi^{1,\text{Harmo}}$     & 0.00 & 99.8 & 27.0 & 0.00 & 99.9 & 26.5 \\
$\phi^{1,\text{eAVG}}$      & 0.00 & 99.5 & 25.0 & 0.00 & 100.0 & 24.4 \\
$\phi^{2,\text{Bonf}}$      & 0.00 & 98.9 & 48.4 & 0.01 & 99.4 & 47.6 \\
$\phi^{2,\text{e-closure}}$ & 0.00 & 99.8 & 33.9 & 0.00 & 100.0 & 32.9 \\
$\phi^{3,\text{Bonf}}$      & 0.00 & 98.9 & 51.9 & 0.00 & 99.2 & 51.5 \\
$\phi^{3,\text{e-closure}}$ & 0.00 & 99.5 & 40.6 & 0.00 & 99.7 & 39.3 \\
\bottomrule
\end{tabular}
\end{small}
\end{center}
\vskip -0.1in
\end{table}

We further compare Domino against SL \citep{soloff2024edge} from negative correlation to strong positive correlation. As illustrated in \cref{fig:simucorchange}, the SL method suffers from a bFDR inflation under weak positive correlation ($\rho = 0.25$). Domino, in contrast, not only ensures robust bFDR control under all tested dependence structures but also outperforms SL in TDR. This advantage stems from the closure-based search mechanism of Domino, which more effectively navigates the discovery boundary under complex dependencies. Meanwhile, the Domino methods, especially Domino-P, achieve power comparable to that of the SL method. 

Additional simulation results varying complex correlation structures, sparsity $\pi_1$, nominal level $\alpha$, and signal strength $\mu_c$ are provided in \cref{appendix:simulation}.

\begin{figure}[tb]
    \vskip 0.1in
    \begin{center}
    \centerline{ \includegraphics[width=\columnwidth]{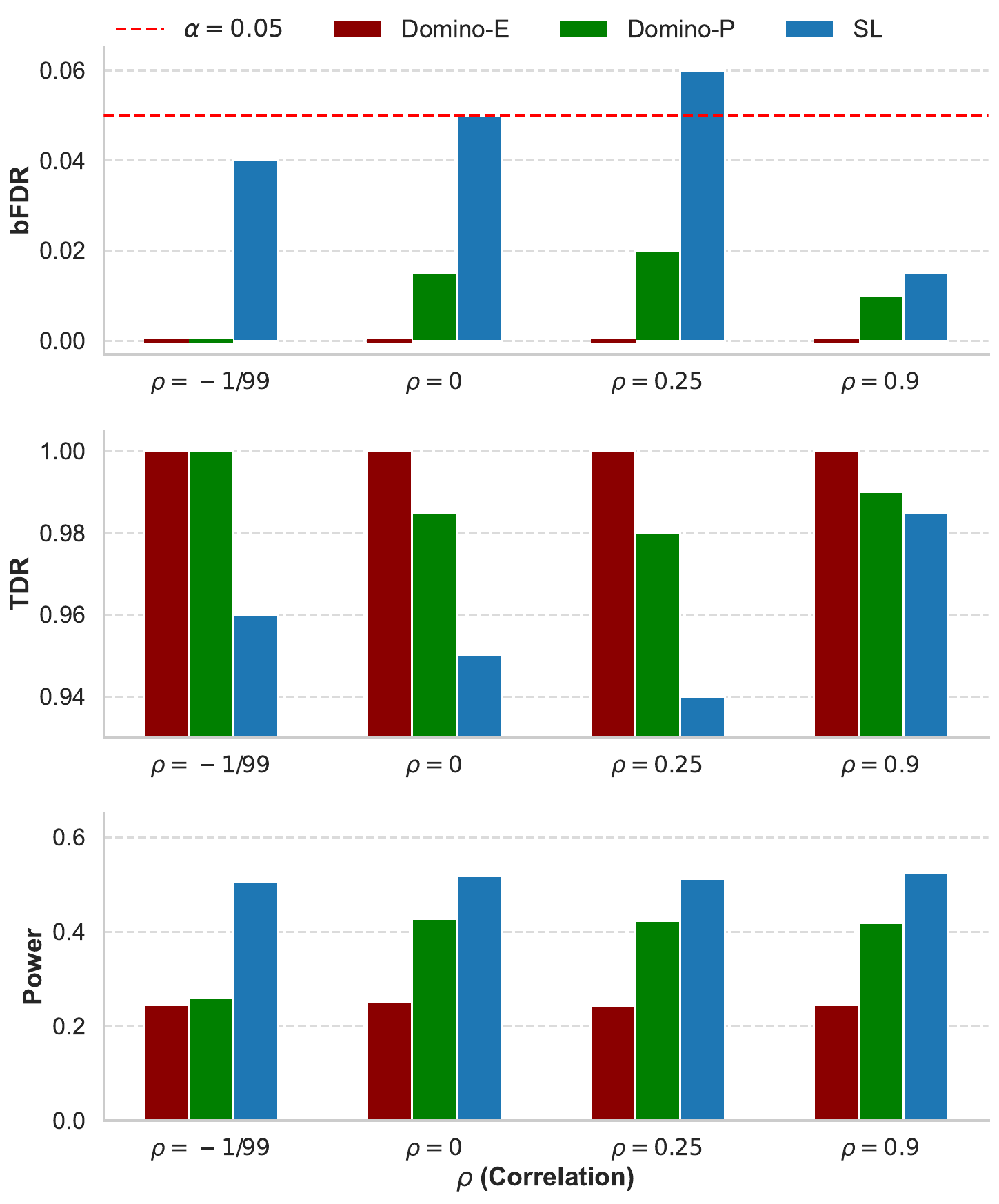}}
    \caption{The bFDR, TDR, and Power of Domino-P, Domino-E, and SL under different correlation structures.} 
    \label{fig:simucorchange}
    \end{center}
    \vskip -0.1in
\end{figure}

\subsection{Real Data: CRISPR Gene Discovery}
\label{sec:genedata}
To evaluate the performance of our proposed Domino, we utilize a public genomic CRISPR-CAS9 screening dataset from the Cancer Dependency Map (DepMap) project \cite{tsherniak2017defining}. The dataset consists of gene effect scores for each cell line, quantifying the impact of knocking out a specific gene on the survival of the cell line. A highly negative score implies that the gene is essential for cell survival, while a score near zero is prone to be non-essential. Our goal is to identify a reliable gene set where all the selected genes are survival-essential. 

We formulate a one-sided hypothesis test where the null hypothesis assumes the gene effect score is 0, while the alternative assumes the score is negative. To calculate the p-values and error rate, we leverage two standardized reference gene datasets from the DepMap portal: one is a set of genes universally recognized as vital, and the other is considered negligible for cell survival. We extract a representative cell line, known as ACH-000011 (NCI-H2087), and map the genes to these two reference datasets. 657 genes and 688 genes are assigned binary labels as vital and non-essential, respectively. The p-values are calculated by learning the distribution of the scores in the non-essential reference genes.

We compare the performance of the BH procedure against three bFDR-control methods, including the SL, Domino-P ($k=1$), and Domino-P ($k=2$). The nominal level is set at $\alpha=0.1$. To quantify boundary reliability, we investigate the number of false discoveries among the $50$ least significant rejections for each method in Figure \ref{fig:real1_1}. It is demonstrated that the BH procedure suffers from severe boundary error inflation, where nearly half of its marginal discoveries are false positives. This confirms our motivation: in the presence of extremely strong signals, global FDR control allows low-quality candidates to ``hitchhike" into the rejection set. In contrast, Domino-P ($k=1$) selects the most reliable and high-quality candidates, maintaining high precision on the marginal hypotheses with negligible boundary errors. By increasing the marginal size to $k=2$, Domino-P ($k=2$) yields a slightly larger rejection set with an increased tolerable boundary error, which remains below the target level. This implies that our framework offers the flexibility for researchers who are more risk-tolerant. The SL method produces a slightly lower-quality discovery set compared to the Domino framework. These results highlight Domino's ability to protect the discovery boundary, ensuring that borderline candidates remain statistically trustworthy.
\begin{figure}[tb]
    \vskip 0.1in
    \begin{center}
    \centerline{ \includegraphics[width=\columnwidth]{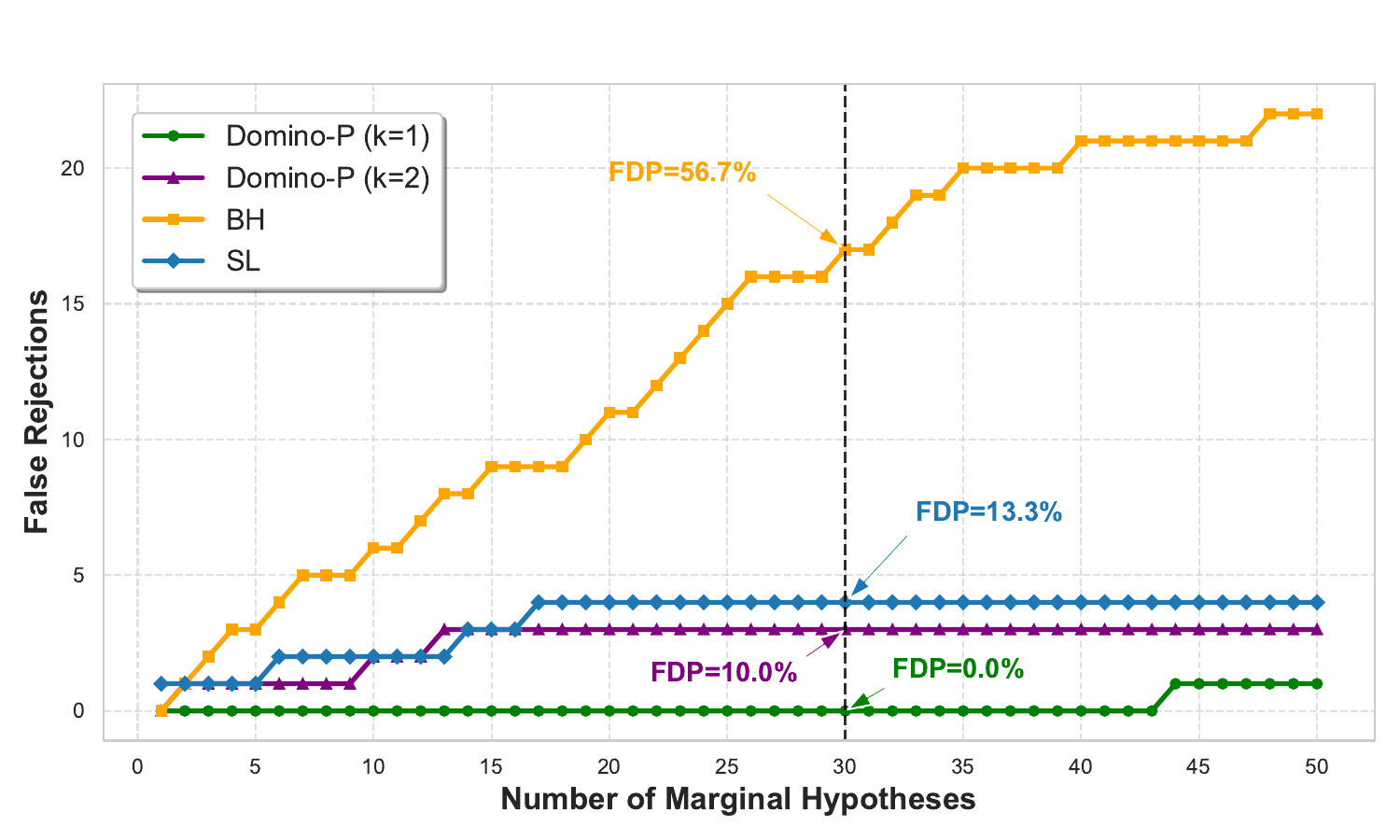}}
    \caption{Comparison of rejection sets within the top 50 marginal rejections at the boundary across different methods at nominal level $\alpha = 0.1$. The FDP corresponding to the top 30 marginal rejections for each method is also displayed.}
    \label{fig:real1_1}
    \end{center}
    \vskip -0.1in
\end{figure}

We further evaluated the bFDR, TDR, and power in a resampling experiment over 100 runs, where in each run one-quarter of the data were randomly allocated to estimate the distribution for computing p-values and the remaining three-quarters were reserved for testing. As shown in \cref{tab:real1_methods}, the Domino framework consistently achieves the lowest bFDR while simultaneously maintaining the highest TDR. This indicates that the rejection sets identified by Domino possess superior reliability and precision compared to other competing methods. Though the power of Domino-P is slightly lower than that of SL, SL suffers a bFDR inflation under dependence. \cref{tab:real1_dominomethods} presents the performance of the $k$-bFDR Domino-P procedure across $k\in\{1,2,3\}$. As expected, a larger $k$ implies a more liberal constraint on the boundary error metric, allowing the researchers to select a specific configuration aligned with their risk tolerance and discovery objectives. The relevant results for Domino-E are presented in \cref{appendix:Domino-E_gene_data}. 

\begin{table*}[tb]
\caption{The bFDR, TDR (\%), and Power (\%) of three methods based on p-values at different nominal levels over $100$ repetitions.} 
\label{tab:real1_methods}
\vskip 0.1in
\begin{center}
\begin{small}
\begin{tabular}{cccccccccc}
\toprule
\multirow{2}{*}{Method} & \multicolumn{3}{c}{$\alpha=0.05$} & \multicolumn{3}{c}{$\alpha=0.1$} & \multicolumn{3}{c}{$\alpha=0.2$} \\
\cmidrule(lr){2-4} \cmidrule(lr){5-7} \cmidrule(lr){8-10}
& bFDR & TDR & Power & bFDR & TDR & Power & bFDR & TDR & Power \\
\midrule
Domino & 0.04 & 99.37 & 86.49 & 0.06 & 99.32 & 87.20 & 0.06 & 99.27 & 87.89 \\
SL     & 0.19 & 98.84 & 90.42 & 0.27 & 98.67 & 90.93 & 0.32 & 98.48 & 91.45 \\
BH     & 0.46 & 96.48 & 93.71 & 0.59 & 95.18 & 94.78 & 0.66 & 92.35 & 96.19 \\
\bottomrule
\end{tabular}
\end{small}
\end{center}
\vskip -0.1in
\end{table*}

\begin{table*}[tb]
\centering
\caption{The $k$-bFDR, TDR (\%), and Power (\%) of Domino-P with different $k$ at different nominal levels over 100 repetitions.}
\label{tab:real1_dominomethods}
\vskip 0.1in
\begin{center}
\begin{small}
\begin{tabular}{cccccccccc}
\toprule
\multirow{2}{*}{Domino-P} & \multicolumn{3}{c}{$\alpha=0.05$} & \multicolumn{3}{c}{$\alpha=0.1$} & \multicolumn{3}{c}{$\alpha=0.2$} \\
\cmidrule(lr){2-4} \cmidrule(lr){5-7} \cmidrule(lr){8-10}
& $k$-bFDR & TDR & Power & $k$-bFDR & TDR & Power & $k$-bFDR & TDR & Power \\
\midrule
$k = 1$ & 0.04 & 99.37 & 86.49 & 0.06 & 99.32 & 87.20 & 0.06 & 99.27 & 87.89 \\
$k = 2$ & 0.03 & 98.98 & 89.76 & 0.06 & 98.79 & 90.37 & 0.04 & 98.64 & 90.97 \\
$k = 3$ & 0.00 & 98.88 & 90.13 & 0.02 & 98.69 & 90.73 & 0.01 & 98.52 & 91.29 \\
\bottomrule
\end{tabular}
\end{small}
\end{center}
\vskip -0.1in
\end{table*}

\subsection{Real Data: S\&P 500 Stock Selection}
We apply our method to the high-stakes stock selection task, utilizing the constituents of the S\&P 500 index downloaded from Yahoo Finance. As a proxy for the U.S. large-cap equity market, this dataset presents a challenging environment where signal-to-noise ratios are often low. We consider the stock data from 2024 to 2025 and exclude assets with missing data or those delisted during the period. Our goal is to construct a high-quality set of stocks with high returns and low risk. 

We employ the Capital Asset Pricing Model (CAPM) to estimate the return of each stock. For the stock $j$, its daily return is modeled as $R_{j,t} = \alpha_j+\beta_j R_{M,t}+\epsilon_{j,t}$, where $R_{M,t}$ is the return of benchmark (S\&P 500 here). The intercept $\alpha_j$ represents the return potential of stock $j$ over the S\&P 500. The null hypothesis for each stock is $\bbH_j:\alpha_j \leq 0$. The one-sided p-values are constructed by t-statistics of the estimated $\hat{\alpha}_j$ using 2024 data. Portfolio selection is performed based on these p-values, and an equally-weighted ``Buy-and-Hold" strategy is simulated throughout the out-of-sample period in 2025.

To avoid selection bias, we conducted a randomized resampling study, selecting 20 stocks per trial over 100 independent repetitions. The average number of selections and strategy returns are summarized in \cref{tab:backtest_methods}. The results demonstrate that Domino successfully picks out the most valuable and credible set of stocks, outperforming both SL and BH in terms of portfolio quality. This outcome serves as a microcosm of real-world high-stakes scenarios. When the penalty for a poor investment choice is substantial, statistical guarantees regarding the decision boundary are often of far greater significance.
\begin{table}[tb]
\caption{The average number of selected stocks and returns (\%) of three methods at different nominal levels over $100$ repetitions.} 
\label{tab:backtest_methods}
\vskip 0.1in
\begin{center}
\begin{small}
\begin{tabular}{ccccc}
\toprule
\multirow{2}{*}{Method} & \multicolumn{2}{c}{$\alpha=0.1$} & \multicolumn{2}{c}{$\alpha=0.2$} \\
\cmidrule(lr){2-3} \cmidrule(lr){4-5}
& Number & Return & Number & Return \\
\midrule
Domino ($k=1$) & 9.35 & 12.42 & 9.39 & 12.41 \\
Domino ($k=2$)& 9.79 & 12.36 & 9.85 & 12.38 \\
Domino ($k=3$) & 9.80 & 12.38 & 9.91 & 12.35 \\
SL & 9.62 & 11.94 & 9.69 & 11.97 \\
BH & 9.91 & 12.35 & 9.92 & 12.32 \\
\bottomrule
\end{tabular}
\end{small}
\end{center}
\vskip -0.1in
\end{table}

\section{Summary}

This work establishes a comprehensive framework for error control at the rejection boundary through the introduction of $k$-bFDR and the Domino algorithm. By shifting the focus from global average quality to localized marginal reliability, we address the critical ``hitchhiking" phenomenon that often compromises the reproducibility of borderline discoveries in high-throughput studies. The proposed Domino framework, rooted in a novel adaptation of the closure testing principle, bridges the gap between theoretical rigor and practical utility. Domino can guarantee error control under arbitrary dependence and is compatible with both p-values and e-values. Our empirical results in genomics and finance demonstrate that by protecting the marginal set, Domino effectively filters out boundary noise where traditional FDR procedures fail. As a scalable and modular meta-framework, Domino offers a robust foundation for trustworthy and reproducible discovery in high-stakes applications.

The general theoretical nature of Domino allows it to manage various dependencies and be applicable across diverse practical settings, which entails a potential trade-off in testing power. In particular, when additional information suggests the presence of specific dependence patterns, such as independence, weak dependence, or exchangeable structures, the Domino method may not fully exploit these conditions and structures. As a result, it might have reduced effectiveness in detecting true alternatives compared to methods that are specifically tailored to such structures. This motivates future research on designing more efficient strategies that adapt to known dependence structures while maintaining $k$-bFDR control.

\section*{Acknowledgements}

We sincerely thank the anonymous reviewers for their insightful comments and constructive suggestions, which have significantly enhanced the quality of this manuscript. Haojie Ren was supported by the National Key R\&D Program of China (Grant No. 2024YFA1012200), and the National Natural Science Foundation of China (Grant Nos. 12471262, 12522115), and Shanghai Jiao Tong University 2030 Initiative. Changliang Zou was supported by the National Key R\&D Program of China (Grant No. 2022YFA1003800) and the National Natural Science Foundation of China (Grant No. 12231011).

\section*{Impact Statement}

This paper presents work whose goal is to advance the field of machine learning. There are many potential societal consequences of our work, none of which we feel must be specifically highlighted here.


\bibliography{main}
\bibliographystyle{icml2026}

\newpage
\appendix
\onecolumn
\section{Proofs}\label{appendix:proofs}

In this section, we include all the necessary proofs of the results throughout the paper. 

\subsection{Proof of \cref{prop:bfdr_kbfdr}}

\begin{proof}
    (a) When all null hypotheses are true, all rejections must necessarily be false discoveries, yielding $\mathbbm{1}\left\{\left\{I^{(1)}_{\calR},\ldots,I^{(k)}_{\calR}\right\}\subseteq\calH_0\right\}=\mathbbm{1}\left\{|\calH_0\cap \calR|\geq k\right\}$ and hence $\kbFDR(\calR)=\kFWER(\calR)$. 
    
    Additionally, building upon the discussion regarding the relationship between FDR and FWER in \citep{benjamini1995controlling}, we can readily derive the equivalence between bFDR and FDR under the global null. 
    
    (b) In general settings, it holds that $\mathbbm{1}\left\{\left\{I^{(1)}_{\calR},\ldots,I^{(k)}_{\calR}\right\}\subseteq\calH_0\right\}\leq \mathbbm{1}\left\{|\calH_0\cap \calR|\geq k\right\}$ by definition. Therefore, we have $\kbFDR(\calR)\leq \kFWER(\calR)$ after taking expectations on both sides. 
\end{proof}  

\subsection{Proof of \cref{thm:bFDR_control}}

\begin{proof}
    Without loss of generality, assume the size of the rejection set $\calR_r$ produced by the Domino algorithm is $r\geq k$ and $\calR_r$ satisfies the Domino condition $\mathcal{C}(r)$ in \cref{def:domino_condition}. Otherwise, only the trivial set with $\calR_{k-1}$ most significant hypotheses would be rejected, thereby trivially satisfying $k$-bFDR. 
    
    Assume $|\calH_0|\geq k$, or the conclusion holds trivially. For rejection set $\calR_r$, it holds that
    \begin{equation*}
        \mathbbm{1}\{\calM_{k,r}\subseteq \calH_0\}\leq \mathbbm{1}\{\phi_{\calH_0}^k=1\}
    \end{equation*}
    by the Domino condition $\mathcal{C}(r)$ in \cref{def:domino_condition}. Taking expectations on both sides,
    \begin{equation*}
        \kbFDR(\calR_r)=\bbP\left( \calM_{k,r}\subseteq\calH_0\right)\leq \bbP\left( \phi_{\calH_0}^k=1\right),
    \end{equation*}
    which is bounded by $\alpha$, since $\phi_S^k$ is a valid $k$-local test at level $\alpha$ (\cref{def:k-local_test}). 
\end{proof}

\subsection{Proof of \cref{prop:k-local_test_from_e-closure_principle}}

\begin{proof}
    Before demonstrating how the e-closure principle yields valid $k$-local tests, we first establish its capability to control $k$-FWER. Define 
    \begin{equation*}
        f^{\kFWER}_S(\calR)=\mathbbm{1}\left\{|S\cap \calR|\geq k\right\} \text{ for a subset }S\subseteq[m]. 
    \end{equation*}
    Taking $f^{\kFWER}_S(\calR)$ as the error rate function for $k$-FWER control, if a set $\calR$ satisfies
    \begin{equation}\label{eq:e-closure_principle_kFWER_control}
        \text{for } \calR: \ e_S\geq \frac{\mathbbm{1}\left\{|S\cap \calR|\geq k\right\}}{\alpha}\text{ for all }S\subseteq[m], 
    \end{equation}
    then $\kFWER(\calR)\leq\alpha$ holds by the e-closure principle \citep{xu2025bringing}. 
    
    We now prove the validity of $\phi_S^{k,\text{e-closure}}$ in \eqref{eq:k-local_test_from_e-closure_principle} as a $k$-local test. In fact, it suffices to show that the condition in \eqref{eq:k-local_test_from_e-closure_principle} for deriving $\phi_S^{k,\text{e-closure}}=1$ is equivalent to \eqref{eq:e-closure_principle_kFWER_control}. Considering a given subset $S\subseteq[m]$ with $|S|\geq k$, $\phi_S^k=1$ for testing $\bbH_S$ if and only if at least $k$ of the individual hypotheses are found significant by definition. By the e-closure principle, this is equivalent to the existence of a subset $W_S\subseteq S$ with $|W_S|\geq k$ satisfying condition \eqref{eq:e-closure_principle_kFWER_control}, i.e., 
    \begin{equation*}
        \exists~W_S \subseteq S \text{ with } |W_S|\geq k, \text{ such that }e_T \geq \frac{\mathbbm{1}\{|T\cap W_S|\geq k\}}{\alpha} \text { for all } T \subseteq S, 
    \end{equation*}
    which completes the proof. 
\end{proof}

\subsection{Proof of \cref{prop:equivalence_to_e-closure_principle}}

\begin{proof}
    Recall that the $k$-bFDR of the rejection set $\calR$ can be expressed in expectation form: 
    \begin{equation*}
        \kbFDR(\calR)=\bbE\left[\mathbbm{1}\left\{\left\{I^{(1)}_{\calR},\ldots,I^{(k)}_{\calR}\right\}\subseteq\calH_0\right\} \right]=\bbE\left[f^{\kbFDR}_{\calH_0}(\calR)\right],
    \end{equation*}
    where
    \begin{equation*}
        f^{\kbFDR}_S(\calR)=\mathbbm{1}\left\{\left\{I^{(1)}_{\calR},\ldots,I^{(k)}_{\calR}\right\}\subseteq S\right\} \text{ for a subset }S\subseteq[m]. 
    \end{equation*}
    
    Taking $f^{\kbFDR}_S(\calR)$ as the error rate function in the e-closure principle for $k$-bFDR control, the e-closure principle identifies the largest $r$ such that the marginal set $\calM_{r,k}$ of rejection set $\calR_r$ satisfies
    \begin{equation}\label{eq:e-closure_kbFDR_condition}
        \text{ for all }S\supseteq \calM_{k,r}: \ e_S\geq \frac{1}{\alpha}. 
    \end{equation}
    Taking $\phi_S^{k,\text{e-closure}}$ in \eqref{eq:k-local_test_from_e-closure_principle} as the $k$-local test, the Domino identifies the largest $r$ such that the marginal set $\calM_{r,k}$ of rejection set $\calR_r$ satisfies
    \begin{equation}\label{eq:domino_e-closure_local_kbFDR_condition}
        \text{ for all }S\supseteq \calM_{k,r}: \  \exists~W_S \subseteq S \text{ with } |W_S|\geq k, \text{ such that } e_T \geq \frac{\mathbbm{1}\{|T\cap W_S|\geq k\}}{\alpha} \text { for all } T \subseteq S.  
    \end{equation}
    To establish the property, it suffices to demonstrate the equivalence between conditions \eqref{eq:e-closure_kbFDR_condition} and \eqref{eq:domino_e-closure_local_kbFDR_condition}.
    
    First, we prove that condition \eqref{eq:e-closure_kbFDR_condition} leads to condition \eqref{eq:domino_e-closure_local_kbFDR_condition}. For any given subset $S\supseteq \calM_{k,r}$, it holds that $|\calM_{k,r}|\geq k$ and for all $T\subseteq S$:
    \begin{itemize}
        \item If $|T\cap \calM_{k,r}|< k$, then $e_T\geq \frac{\mathbbm{1}\{|T\cap \calM_{k,r}|\geq k\}}{\alpha} =0$ naturally;
        \item If $|T\cap \calM_{k,r}|\geq k$, then we have $T\supseteq \calM_{k,r}$ and $e_T\geq \frac{\mathbbm{1}\{|T\cap \calM_{k,r}|\geq k\}}{\alpha} =\frac{1}{\alpha}$ by \eqref{eq:e-closure_kbFDR_condition}. 
    \end{itemize}
    Combining these two cases, we conclude that \eqref{eq:domino_e-closure_local_kbFDR_condition} holds. We now proceed to prove the converse implication. For any given subset $S\supseteq \calM_{k,r}$, since \eqref{eq:domino_e-closure_local_kbFDR_condition} holds, there exists $W_S\subseteq S$ with $|W_S|\geq k$, implying $|S\cap W_S|\geq k$. From \eqref{eq:domino_e-closure_local_kbFDR_condition}, we have 
    \begin{equation*}
        e_S \geq \frac{\mathbbm{1}\{|S\cap W_S|\geq k\}}{\alpha}=\frac{1}{\alpha}. 
    \end{equation*}
    Thus, we finish the whole proof. 
\end{proof}

\section{Domino Algorithm Based on E-values}\label{appendix:algorithm_domino-e}

In this section, we discuss the scenario of implementing the Domino algorithm based on e-values. 

Consider e-value $e_j$ associated evidence for hypothesis $\bbH_j$ for $j\in[m]$, and let $\pi: \{1, \dots, m\} \to \{1, \dots, m\}$ be a permutation that sorts the e-values such that $e_{\pi(1)} \ge e_{\pi(2)} \ge \dots \ge e_{\pi(m)}$. We then proceed analogously to the p-value-based Domino algorithm to identify the largest $r$ such that every intersection hypothesis containing the marginal set $\mathcal{M}_{r,k}$ defined in \eqref{eq:marginal_set} is rejected by a valid $k$-local test, i.e., identifying the largest $r$ satisfying the Domino condition in \cref{def:domino_condition}. If all $r\geq k$ fail to fulfill condition \eqref{eq:kbfdr_condition}, we output the rejection set as $\calR=\{j\in[m]:e_j\geq e_{{\pi(k-1)}}\}$, with $e_{\pi(0)}=+\infty$ by convention. The whole Domino algorithm based on e-values is summarized in \cref{alg:domino-e}.

\begin{algorithm}[htb]
   \caption{Domino based on e-values (Domino-E) }
   \label{alg:domino-e}
\begin{algorithmic}[1]
    \STATE {\bfseries Input:} null hypotheses $\bbH_1,\ldots,\bbH_m$ with observed e-values $e_1,\ldots,e_m$, target $k$-bFDR level $\alpha$, valid $k$-local test $\phi_S^k$ for $S\subseteq [m]$. 
    \STATE Sort the e-values in descending order $e_{\pi(1)} \ge e_{\pi(2)} \ge \dots \ge e_{\pi(m)}$; 
    \STATE Initialize the rejection set $\calR=\left\{j:e_j\geq e_{\pi(k-1)}\right\}$ and the index $r=m$; 
    \WHILE{$|\calR|<k$ \AND $r\geq k$}
        \STATE Let $\mathcal{M}_{r,k} = \{\pi(r-k+1),\pi(r-k+2),\ldots,\pi(r)\}$; 
        \IF{$\mathcal{C}(r)$ in \eqref{eq:kbfdr_condition} holds for $\mathcal{M}_{r,k}$ }
        \STATE Update $\calR=\left\{j:e_j\geq e_{\pi(r)}\right\}$;
        \STATE \textbf{break}
        \ENDIF
    \STATE Update $r\leftarrow r-1$;
    \ENDWHILE
    \STATE {\bfseries Output:} rejection set $\calR$. 
\end{algorithmic}
\end{algorithm}

The relative advantages of Domino-E over Domino-P stem from the inherent properties of e-values and p-values, as the overall framework of the Domino-E and Domino-P methods is generally aligned. Compared to p-values, e-values are grounded in the concept of expectation and offer a natural advantage in that they are more readily combinable. This makes them potentially easier to integrate into valid $k$-local tests and to yield $k$-local tests with greater power. For instance, we construct a valid $k$-local test based on the e-closure principle as shown in \eqref{eq:k-local_test_from_e-closure_principle}. Moreover, \citet{hartog2025family} demonstrated that the weighted e-Bonferroni tests are provably more powerful than their weighted p-Bonferroni counterparts under the same weights and ($1/e$)-form p-values, indicating that the e-value aggregation is more efficient. Domino-E is also preferred when e-values are easier to construct than p-values, particularly in complex, high-dimensional, or distribution-free settings. For example, for composite null hypotheses, e-values can be formed using mixture likelihood ratios, avoiding the conservative calibration typically required for p-values \citep{zhang2024existence}. Furthermore, our simulation results in \cref{sec:simulation} and \cref{appendix:simulation} indicate that Domino-E is particularly favorable when the goal is to obtain a small but precise rejection set, i.e., a set with a modest number of rejections, nearly all of which are correct. Intuitively, this is because Domino-E tends to reject only hypotheses with sufficiently large e-values. As a result, while Domino-E yields fewer total rejections compared to Domino-P, its TDR approaches one.

\section{Fast Implementation of Domino}\label{appendix:fast_implementation}

In this section, we develop computationally efficient implementations of Domino based on p-values in \cref{alg:domino-p} for cases where the $k$-local test employs either the generalized Bonferroni procedure $\phi_S^{k, \text{Bonf}}$ or the p-value combination by scaled harmonic mean $\phi_S^{1, \text{Harmo}}$. 

\subsection{Generalized Bonferroni Procedure}\label{appendix:fast_implementation_bonferroni}

As established in \cref{sec:implementation}, when employing the generalized Bonferroni procedure as the $k$-local test $\phi_S^{k,\text{Bonf}}$ in \eqref{eq:k-Bonferroni}, for a marginal set $\calM_{r,k}$, it suffices to verify the condition only for subsets $S$ constructed by sequentially adding elements from $\{\pi(1),\ldots,\pi(r-k)\}$ in reverse order to $\calM_{r,k}$, i.e., first adding $\{\pi(r-k)\}$, then $\{\pi(r-k-1),\pi(r-k)\}$, and finally the full set $\{\pi(1),\ldots,\pi(r-k)\}$. Moreover, as implied by \eqref{eq:k-Bonferroni}, we can initiate the search procedure for $r$ at the largest order statistic that does not exceed $\alpha$, thereby further reducing computational requirements. This whole algorithm, summarized in \cref{alg:domino-bonferroni}, reduces the computational complexity from exponential $O(2^m)$ to polynomial $O(m^2)$. 

\begin{algorithm}[htb]
   \caption{Domino-P with $\phi_S^{k,\text{Bonf}}$ as the $k$-local test}
   \label{alg:domino-bonferroni}
\begin{algorithmic}[1]
    \STATE {\bfseries Input:} null hypotheses $\bbH_1,\ldots,\bbH_m$ with observed p-values $p_1,\ldots,p_m$, target $k$-bFDR level $\alpha$. 
    \STATE Sort the p-values in ascending order $p_{\pi(1)} \le p_{\pi(2)} \le \dots \le p_{\pi(m)}$; 
    \STATE Initialize the rejection set $\calR=\left\{j:p_j\leq p_{\pi(k-1)}\right\}$ and the index $r=\arg\max \{v\in[m]:p_{\pi(v)}\leq\alpha\}$; 
    \WHILE{$|\calR|<k$ \AND $r\geq k$}
        \STATE Let $\mathcal{M}_{r,k} = \{\pi(r-k+1),\pi(r-k+2),\ldots,\pi(r)\}$; 
        \STATE Initialize the indicator $\Delta=1$ and the index $\ell=r$; 
        \WHILE{$\Delta=1$ \AND $\ell\geq 1$}
        \STATE Update $\Delta\leftarrow \Delta\cdot\mathbbm{1}\left\{\frac{k+r-\ell}{k}p_{\pi(\ell)}\leq\alpha\right\}$;
        \STATE $\ell\leftarrow \ell-1$;
        \ENDWHILE
        \IF{$\Delta=1$}
        \STATE Update $\calR=\left\{j:p_j\leq p_{\pi(r)}\right\}$;
        \STATE \textbf{break}
        \ENDIF
    \STATE Update $r\leftarrow r-1$;
    \ENDWHILE
    \STATE {\bfseries Output:} rejection set $\calR$. 
\end{algorithmic}
\end{algorithm}

\subsection{P-value Combination by Scaled Harmonic Mean}

For the 1-local test $\phi_S^{1,\text{Harmo}}$  in \eqref{eq:harmonic} employing p-value combination via scaled harmonic mean, the condition is to check whether the harmonic mean of p-values in the subset scaled by the correction factor $e\ln|S|$ is no larger than $\alpha$. To ensure $\phi_S^{1,\text{Harmo}}=1$ holds for all $S$ for a marginal set $\calM_{r,1}=\{\pi(r)\}$, it reduces to verifying
\begin{equation*}
    \max_{S\ni \calM_{r,1}} e\ln|S|\frac{|S|}{\sum_{j\in S}\frac{1}{p_j}} \leq \alpha. 
\end{equation*}
This permits us to focus verification exclusively on critical sets $S$ most likely to violate the condition, i.e., those $S$ where the combined p-values tend to be larger. Importantly, the harmonic mean increases when larger p-values are added to $S$, while the correction factor $e\ln|S|$ also increases strictly with $|S|$. These monotonic properties imply that verification can be optimized by sequentially expanding the marginal set $\calM_{r,1}$ through reverse-ordered incorporation of elements from the tail $\{\pi(r+1),\ldots,\pi(m)\}$, requiring only $m-r+1$ specific conditions in total to be checked. This incremental expansion approach offers an additional computational advantage. The harmonic mean of the augmented set can be computed recursively using the existing harmonic mean of $S$ and the newly added p-value, eliminating the need for complete recomputation over all elements. Specifically, letting $\text{Har}(S)=\frac{|S|}{\sum_{j\in S}\frac{1}{p_j}}$ denote the harmonic mean of p-values in $S$, the updated harmonic mean after adding $p_{\pi(\ell)}$ becomes:
\begin{equation*}
    \text{Har}(S\cup\{\pi(\ell)\}) = \frac{|S|+1}{\frac{|S|}{\text{Har}(S)} + \frac{1}{p_{\pi(\ell)}}}.
\end{equation*}
This recursive formulation reduces the computational complexity from linear $O(|S|)$ to constant $O(1)$ per update. Furthermore, analogous to the implementation using the generalized Bonferroni procedure as the $k$-local test in \cref{appendix:fast_implementation_bonferroni}, we can optimize the search algorithm by initiating the procedure at $r^{\prime}=\arg\max \{v\in[m]:p_{\pi(v)}\leq\alpha\}$ and $p_{\pi(r^{\prime})}$ is the largest order statistic satisfying $p_{\pi(r^{\prime})}\leq \alpha$. The resulting procedure achieves computational efficiency and reduces the total computational complexity from exponential $O(2^m)$ to polynomial $O(m^2)$, which is summarized in \cref{alg:domino-harmonic}. 

\begin{algorithm}[htb]
   \caption{Domino-P with $\phi_S^{1,\text{Harmo}}$ as the $k$-local test when $k=1$}
   \label{alg:domino-harmonic}
\begin{algorithmic}[1]
    \STATE {\bfseries Input:} null hypotheses $\bbH_1,\ldots,\bbH_m$ with observed p-values $p_1,\ldots,p_m$, target $k$-bFDR level $\alpha$. 
    \STATE Sort the p-values in ascending order $p_{\pi(1)} \le p_{\pi(2)} \le \dots \le p_{\pi(m)}$; 
    \STATE Initialize the rejection set $\calR=\varnothing$ and the index $r=\arg\max \{v\in[m]:p_{\pi(v)}\leq\alpha\}$; 
    \WHILE{$|\calR|<1$ \AND $r\geq 1$}
        \STATE Let $\mathcal{M}_{r,1} = \{\pi(r)\}$; 
        \STATE Initialize the indicator $\Delta=1$, the index $\ell=m$, the subset $S=\mathcal{M}_{r,1}$, and the harmonic mean $\text{Har}(S)=p_{\pi(r)}$; 
        \WHILE{$\Delta=1$ \AND $\ell>r$}
        \STATE Update $S\leftarrow S \cup \{\pi(\ell)\}$;
        \STATE Update $\text{Har}(S)\leftarrow\frac{|S|}{\frac{|S|-1}{\text{Har}(S)} + \frac{1}{p_{\pi(\ell)}}}$;
        \STATE Update $\Delta\leftarrow \Delta\cdot\mathbbm{1}\left\{e\ln|S|\text{Har}(S) \leq\alpha\right\}$;
        \STATE $\ell\leftarrow \ell-1$;
        \ENDWHILE
        \IF{$\Delta=1$}
        \STATE Update $\calR=\left\{j:p_j\leq p_{\pi(r)}\right\}$;
        \STATE \textbf{break}
        \ENDIF
    \STATE Update $r\leftarrow r-1$;
    \ENDWHILE
    \STATE {\bfseries Output:} rejection set $\calR$. 
\end{algorithmic}
\end{algorithm}

\section{Practical Guidance on the Choice of $k$}

For practitioners adopting the Domino framework with $k$-bFDR control, the principles guiding the choice of $k$ are fundamentally practice-oriented. Different values of $k$ correspond to distinct error metrics. Selecting an appropriate $k$ is not a matter of determining which criterion is inherently superior. Instead, it depends on the specific risk preferences and decision-making context. From a decision-theoretic perspective, these criteria reflect different risk priorities, and an inappropriate choice may yield results lacking practical significance.

Specifically, the choice of $k$ should be guided by the intended use of the selected set and the tolerance for boundary risk. More generally, $k$ can be interpreted as the size of the ``critical frontier" within the rejection set for which reliability is essential, and should therefore align with resource constraints or downstream validation capacity. For instance, if only $k$ candidates can be validated or deployed in practice, setting $k$ accordingly ensures that the entire actionable subset enjoys a rigorous boundary guarantee. This interpretation positions $k$-bFDR as a natural middle ground between global (FDR) and worst-case (FWER) control, offering a principled and application-aligned way to trade off power and reliability. 

If one has a downstream task for which the choice of $k$ may lead to varying performance, then a data-driven approach to selecting $k$ would be desirable. However, this problem is highly challenging, as it introduces issues of ``double-dipping". Developing a method that enables the selection of $k$ with valid theoretical guarantees would require a carefully redesigned framework, especially given that different downstream tasks may prioritize different performance metrics. As such, this remains an important direction for future research.

\section{Additional Simulation Results}
\label{appendix:simulation}

We provide additional simulation results in this section to comprehensively demonstrate the performance characteristics of our proposed Domino method. 

\subsection{Domino under Other Correlations}

We extend our evaluation to include scenarios with negative correlation ($\rho = -1/99$) and strong positive correlation ($\rho = 0.9$) in \cref{tab:negativecor}. The configuration of other hyperparameters is the same as \cref{tab:domino_comparison}. We also conduct simulations under a stronger signal strength $\mu_c = 6$ in \cref{tab:domino_comparison_muc6}. From \cref{tab:negativecor,tab:domino_comparison_muc6}, while $\phi^{1,\text{Simes}}$ is of great discovery capability under i.i.d. and PRDS structure, it may fail to maintain validity under other dependence structures, i.e., negative correlation. Meanwhile, $\phi^{1,\text{Harmo}}$ empirically proves valid bFDR control across arbitrary dependence regimes.

\begin{table}[htb]

\caption{The $k$-bFDR, TDR(\%), and Power (\%) of different Domino methods at $\alpha=0.05$, $\mu_c = 3$, and $\pi_1 = 0.2$ with $\rho \in\{-1/99,\, 0.9\}$.}
\label{tab:negativecor}
\vskip 0.1in
\begin{center}
\begin{small}
\begin{tabular}{lcccccc}
\toprule
& \multicolumn{3}{c}{$\rho=-1/99$} & \multicolumn{3}{c}{$\rho=0.9$} \\
\cmidrule(lr){2-4} \cmidrule(lr){5-7}
Local test & $k$-bFDR & TDR & Power & $k$-bFDR & TDR & Power \\
\midrule
$\phi^{1,\text{Simes}}$     & 0.00 & 99.8 & 40.9 & 0.02 & 99.1 & 42.9 \\
$\phi^{1,\text{Harmo}}$     & 0.00 & 100.0 & 25.2 & 0.01 & 99.2 & 27.3 \\
$\phi^{1,\text{eAVG}}$      & 0.00 & 100.0 & 23.9 & 0.00 & 100.0 & 24.9 \\
$\phi^{2,\text{Bonf}}$      & 0.00 & 99.7 & 46.0 & 0.01 & 99.8 & 49.9 \\
$\phi^{2,\text{e-closure}}$ & 0.00 & 99.8 & 32.0 & 0.00 & 99.9 & 34.3 \\
$\phi^{3,\text{Bonf}}$      & 0.00 & 98.9 & 49.8 & 0.02 & 99.2 & 54.2 \\
$\phi^{3,\text{e-closure}}$ & 0.00 & 99.9 & 38.6 & 0.01 & 99.7 & 41.1 \\
\bottomrule
\end{tabular}
\end{small}
\end{center}
\vskip -0.1in
\end{table}

\begin{table}[htb]
\caption{The $k$-bFDR, TDR(\%), and Power (\%) of different Domino methods at $\alpha=0.05$, $\mu_c = 6$, and $\pi_1 = 0.2$.}
\label{tab:domino_comparison_muc6}
\vskip 0.1in
\begin{center}
\begin{small}
\setlength{\tabcolsep}{4pt}
\begin{tabular}{lcccccccccccc}
\toprule
& \multicolumn{3}{c}{$\rho=-1/99$}
& \multicolumn{3}{c}{$\rho=0$}
& \multicolumn{3}{c}{$\rho=0.25$}
& \multicolumn{3}{c}{$\rho=0.9$} \\
\cmidrule(lr){2-4} \cmidrule(lr){5-7}
\cmidrule(lr){8-10} \cmidrule(lr){11-13}
Local test
  & $k$-bFDR & TDR & Power
  & $k$-bFDR & TDR & Power
  & $k$-bFDR & TDR & Power
  & $k$-bFDR & TDR & Power \\
\midrule
$\phi^{1,\text{Simes}}$
  & 0.06 & 99.67 & 97.45
  & 0.02 & 99.76 & 97.42
  & 0.05 & 99.77 & 97.56
  & 0.00 & 100.00 & 97.60 \\
$\phi^{1,\text{Harmo}}$
  & 0.00 & 100.00 & 93.29
  & 0.01 & 99.87 & 93.44
  & 0.01 & 99.96 & 93.48
  & 0.00 & 100.00 & 93.97 \\
$\phi^{1,\text{eAVG}}$
  & 0.00 & 100.00 & 88.85
  & 0.00 & 99.94 & 89.30
  & 0.00 & 100.00 & 90.04
  & 0.00 & 100.00 & 90.46 \\
$\phi^{2,\text{Bonf}}$
  & 0.00 & 99.48 & 98.03
  & 0.00 & 99.60 & 98.33
  & 0.04 & 99.52 & 98.36
  & 0.00 & 100.00 & 98.25 \\
$\phi^{2,\text{e-closure}}$
  & 0.00 & 99.20 & 93.84
  & 0.00 & 99.39 & 94.51
  & 0.00 & 98.50 & 94.97
  & 0.00 & 97.96 & 94.00 \\
$\phi^{3,\text{Bonf}}$
  & 0.00 & 99.38 & 98.35
  & 0.00 & 99.20 & 98.64
  & 0.03 & 99.36 & 98.82
  & 0.00 & 99.95 & 98.62 \\
$\phi^{3,\text{e-closure}}$
  & 0.00 & 97.05 & 96.90
  & 0.00 & 96.82 & 97.36
  & 0.00 & 96.19 & 97.75
  & 0.00 & 94.91 & 95.86 \\
\bottomrule
\end{tabular}
\end{small}
\end{center}
\vskip -0.1in
\end{table}

\subsection{Varying Signal Proportion $\pi_1$}

We investigate the performance of the methods with $\pi_1$ varying in $\{0.1,0.2,0.3,0.4,0.5\}$. The nominal level is $\alpha =0.1$. The signal strength is $\mu_c = 3$. The number of data is $m=100$, and the correlation coefficient is $\rho = 0.25$. The outcome is displayed in \cref{fig:simupichange}. The Domino methods always keep a valid $k$-bFDR control, while the SL method fails to control $k$-bFDR when the proportion $\pi_1$ is small. Moreover, both Domino-P and Domino-E yield rejection sets with higher TDR than SL across all tested alternative proportions. In terms of power, the Domino-P and SL methods exhibit comparable power performance, while Domino-E suffers a power loss due to the conservative nature of e-values. 

\begin{figure}[htb]
    \vskip 0.1in
    \begin{center}
    \centerline{ \includegraphics[width=\columnwidth]{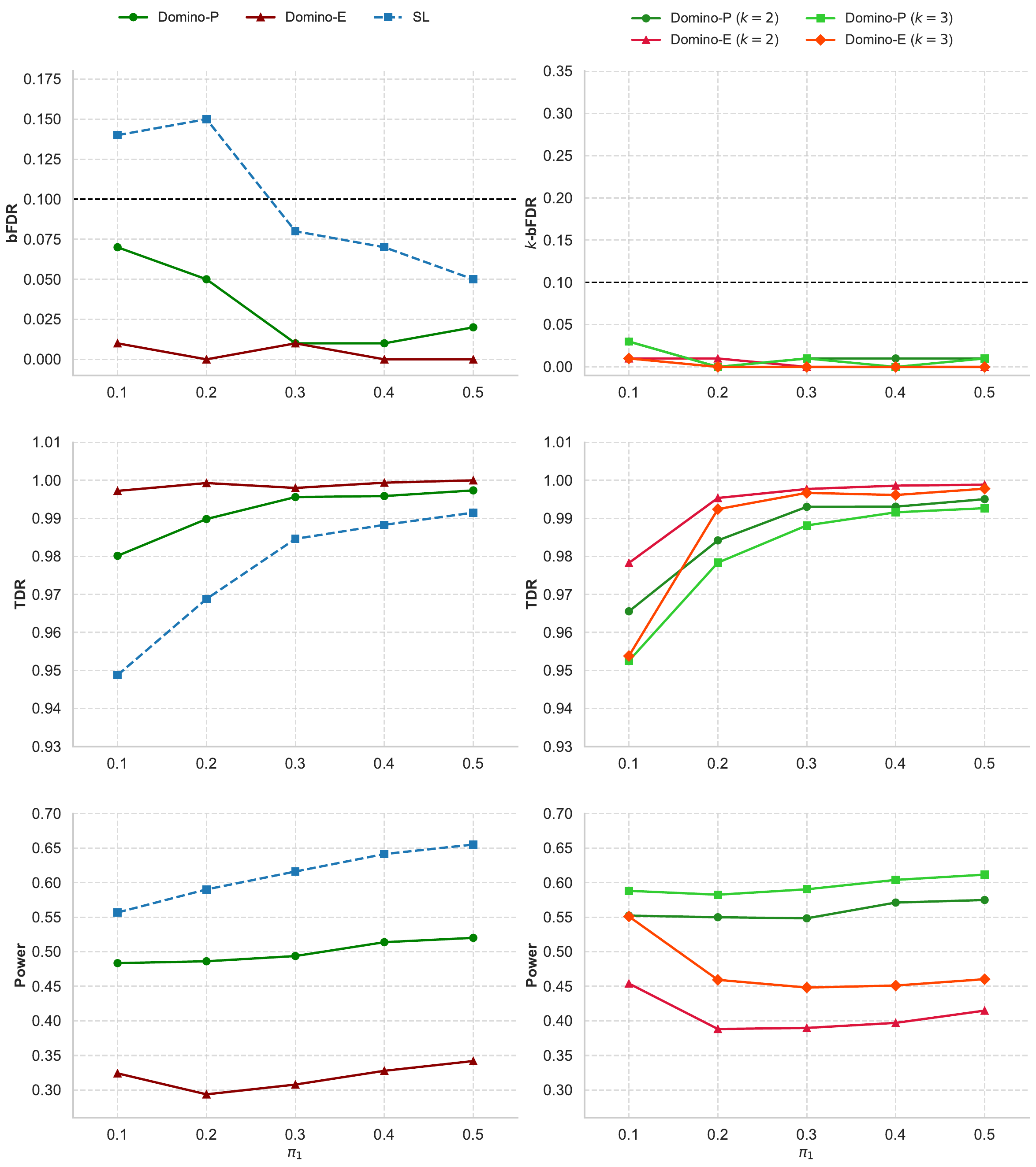}}
    \caption{Performance comparison of Domino and SL methods across varying signal proportions $\pi_1$. }
    \label{fig:simupichange}
    \end{center}
    \vskip -0.1in
\end{figure}

\subsection{Varying Nominal Level $\alpha$}

We investigate the performance of the methods with $\alpha$ varying in $\{0.05, 0.1, 0.15, 0.2, 0.25, 0.3\}$. The alternative proportion is $\pi_1 = 0.2$. The signal strength is $\mu_c = 3$. The number of data is $m=100$, and the correlation coefficient is $\rho = 0.25$. \cref{fig:simualphachange} illustrates that as $\alpha$ increases, the $k$-bFDR and power rise while the TDR declines. Consistent with \cref{fig:simupichange}, the Domino methods outperform SL in the sense of maintaining valid $k$-bFDR control while achieving higher TDP and comparable power.

\begin{figure}[htb]
    \vskip 0.1in
    \begin{center}
    \centerline{ \includegraphics[width=\columnwidth]{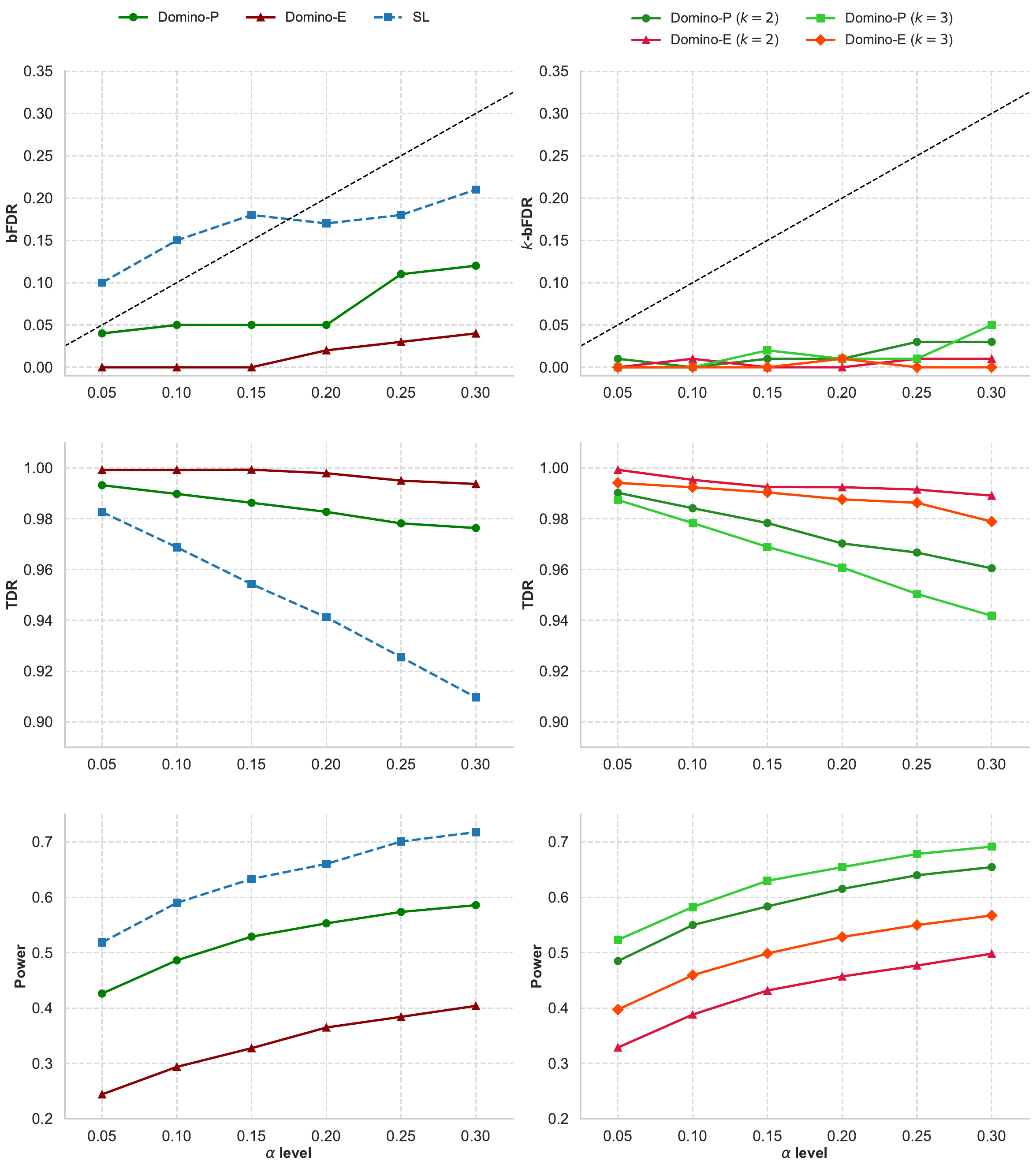}}
    \caption{Performance comparison of Domino and SL methods across varying nominal levels $\alpha$. }
    \label{fig:simualphachange}
    \end{center}
    \vskip -0.1in
\end{figure}

\subsection{Varying Signal Strength $\mu_c$}

We investigate the performance of the methods with signal strength $\mu_c$ varying in $\{2,3,4,5,6,7\}$ in \cref{fig:simusignalchange}. The alternative proportion $\pi_1 = 0.2$ and the nominal level $\alpha$ is fixed at $0.1$. The number of data is $m=100$, and the correlation coefficient is $\rho = 0.25$. For the case of $k = 1$, while both Domino and SL achieve bFDR control, Domino yields a higher TDR and achieves a relatively high power. When the boundary constraint is relaxed ($k \geq 2$), both Domino-P and Domino-E keep valid $k$-bFDR control. Domino-P remains stable and demonstrates high TDR performance. Domino-E shows a decreasing trend in TDR as $\mu_c$ rises, since the relaxation of the boundary error metric and stronger signals incorporate more rejection counts. In both $k=1$ and $k \geq 2$ scenarios, the power of Domino-E is slightly lower than that of Domino-P due to the conservativeness of e-values. 

\begin{figure}[htb]
    \vskip 0.1in
    \begin{center}
    \centerline{ \includegraphics[width=\columnwidth]{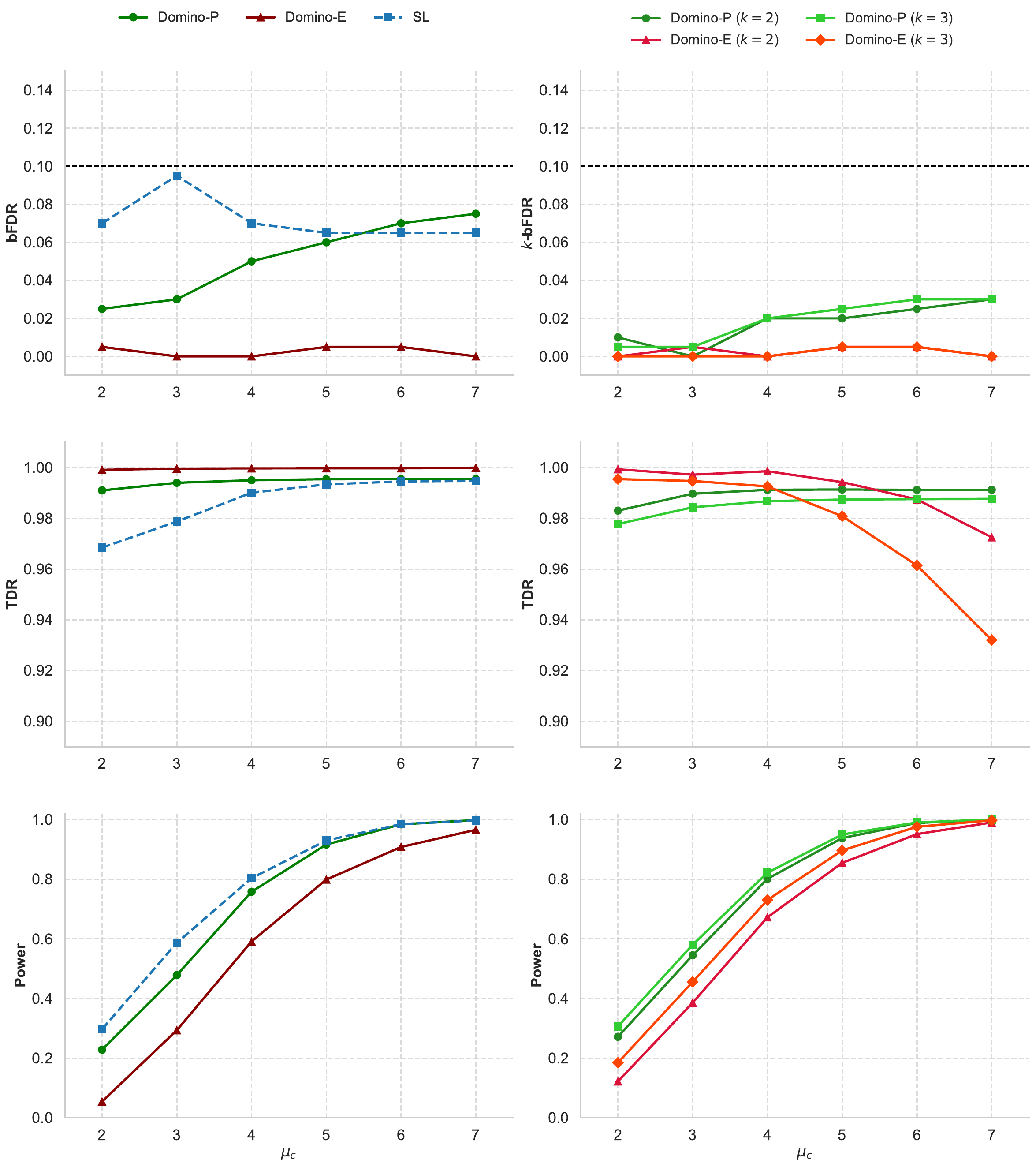}}
    \caption{Performance comparison of Domino and SL methods across varying signal strengths $\mu_c$. }
    \label{fig:simusignalchange}
    \end{center}
    \vskip -0.1in
\end{figure}

\section{Additional Real Data Results}

We provide additional real data results in this section to comprehensively demonstrate the performance characteristics of our proposed Domino method.

\subsection{E-value Results on CRISPR Gene Discovery Study in \cref{sec:genedata}}\label{appendix:Domino-E_gene_data}

We compute the likelihood ratio to serve as e-values for testing data after estimating the distribution, while the remaining configuration follows that of the p-value setting in \cref{sec:genedata}. As summarized in \cref{tab:dominoe_kbfdr}, the e-value-based results demonstrate that Domino-E gives a high-quality rejection set. It maintains valid control over $k$-bFDR, while giving the highest TDR among all methods evaluated in \cref{sec:genedata} with little sacrifice of power. Additionally, the metrics of the Domino-E scale with $k$ in a manner consistent with the trajectory of Domino-P.
\begin{table}[ht]
\caption{The $k$-bFDR, TDR (\%), and Power (\%) of Domino-E with different $k$ at different nominal levels over 100 repetitions.}
\label{tab:dominoe_kbfdr}
\vskip 0.1in
\begin{center}
\begin{small}
\begin{tabular}{l ccc ccc ccc}
\toprule
\multirow{2}{*}{Domino-E} & \multicolumn{3}{c}{$\alpha=0.05$} & \multicolumn{3}{c}{$\alpha=0.1$} & \multicolumn{3}{c}{$\alpha=0.2$} \\
\cmidrule(lr){2-4} \cmidrule(lr){5-7} \cmidrule(lr){8-10}
& bFDR & TDR & Power & bFDR & TDR & Power & bFDR & TDR & Power \\
\midrule
$k = 1$ & 0.02 & 99.44 & 84.86 & 0.02 & 99.44 & 85.49 & 0.07 & 99.40 & 86.12 \\
$k = 2$ & 0.00 & 99.44 & 85.58 & 0.00 & 99.40 & 86.24 & 0.00 & 99.35 & 86.95 \\
$k = 3$ & 0.00 & 99.42 & 86.08 & 0.00 & 99.38 & 86.74 & 0.00 & 99.33 & 87.43 \\
\bottomrule
\end{tabular}
\end{small}
\end{center}
\vskip -0.1in
\end{table}

\subsection{Real Data: Pathway Enrich Analysis}

We utilize the GSE15852 gene expression dataset \citep{ni2010gene}, a classical and widely used benchmark including 43 matched pairs of human breast tumor and normal breast tissues, to further demonstrate the practical gains of the proposed Domino methods in pathways analysis. We adopt the combined score metric of some widely-recognized pathways to evaluate the quality of the discovery set of genes under different methods \citep{kuleshov2016enrichr}. The combined score is formally defined as $z\log(p)$, where $p$ is the unadjusted Fisher's exact p-value, which is computed on the hypergeometric probability of observing at least $k$ overlapping genes between the rejection set and a known biological pathway, and $z$ is a Z-score. The combined score produces a composite enrichment measure of the quality of rejection set: a discovery set with high combined scores of pathways successfully captures valid, biologically meaningful signals (true positives) while simultaneously exerting strict control over noise (false positives). The combined score is computed using the Enrichr tool \citep{kuleshov2016enrichr}. 

We report the combined scores on eight well-studied breast cancer pathways in \cref{tab:pathwayanalysis} for all competing methods evaluated in \cref{sec:genedata}. Though the classical BH method rejects three times more than our proposed Domino, it is outperformed by all the bFDR-based methods in pathway enrichment analysis in the sense that it includes too much noise (null). Compared with SL, our methods obtain higher-quality discovery sets that exclude noise precisely while including real biological signals. 
\begin{table}[ht]
    \caption{Comparison of combined scores for eight pathways across different multiple testing procedures with $\alpha = 0.05$. Numbers in parentheses indicate the number of rejections.}
    \label{tab:pathwayanalysis}
    \vskip 0.1in
    \begin{center}
    \begin{small}
    \begin{tabular}{lcccccc}
          \toprule
          & \multicolumn{3}{c}{Domino-P} & & \\
          \cmidrule(lr){2-4}
          Pathway & $k=1$ (390) & $k=2$ (431) & $k=3$ (454) & SL (1025) & BH (1769) \\
          \midrule
          PPAR signaling pathway                & 301.6 & 259.0 & 290.6 & 173.8 & 111.0 \\
          AMPK signaling pathway                & 125.2 & 130.6 & 117.9 &  49.5 &  21.9 \\
          Fatty acid degradation                &  96.8 & 125.9 & 114.8 & 103.5 &  79.0 \\
          Regulation of lipolysis in adipocytes &  94.5 &  80.8 &  73.3 &  45.5 &  42.5 \\
          Insulin signaling pathway             &  60.7 &  50.7 &  45.3 &  24.3 &   9.8 \\
          Glycerolipid metabolism               &  51.8 &  44.0 &  60.6 &  36.2 &  19.1 \\
          ECM-receptor interaction              &  26.1 &  48.7 &  43.8 &  45.4 &  60.8 \\
          Fc gamma R-mediated phagocytosis      &  48.0 &  40.3 &  36.2 &   8.5 &   8.9 \\
          \bottomrule
        \end{tabular}
    \end{small}
    \end{center}
    \vskip -0.1in
\end{table}


\end{document}